\documentclass[conference]{IEEEtran}
\IEEEoverridecommandlockouts
\usepackage{grffile,tikz}
\usepackage{amsmath,subfig,xspace,bbold,amsthm,multirow,graphicx}
\usepackage[pointedenum]{paralist}
\usepackage{listings}
\usepackage{enumitem}
\usepackage[ruled]{algorithm2e}
\usepackage[breaklinks=true,pdfstartview=FitH]{hyperref}

\newcommand{\citep}[1]{\cite{#1}}
\newcommand{\arxiv}[1]{{#1}}
\newcommand{\canomit}[1]{{}}
\newtheorem{theorem}{Theorem}
\newtheorem{corollary}{Corollary}
\newtheorem{lemma}{Lemma}
\def\R{{ \mathbf{R}}}
\def\N{{ \mathbf{N}}}

\def\bg{{\boldsymbol g}}

\def\bx{{\boldsymbol x}}
\def\by{{\boldsymbol y}}

\def\bz{{\boldsymbol z}}

\def\bp{{\boldsymbol p}}

\def\bs{{\boldsymbol s}}

\def\bd{{\boldsymbol d}}
\def\b1{{\boldsymbol 1}}

\DeclareMathOperator{\trace}{tr}
\def\sjwcomment#1{\footnote{{\bf SJW:} #1}}
\newcommand{\modify}[1]{{{#1}}}
\newcommand{\smodify}[1]{{{#1}}}

\begin{document}
\title{Using Neural Networks to Detect Line Outages from PMU
	Data\thanks{This work was supported by a DOE grant subcontracted
	through Argonne National Laboratory Award 3F-30222,  National
Science Foundation Grants IIS-1447449 and CCF-1740707, and AFOSR Award
FA9550-13-1-0138.}
}
\author{Ching-pei Lee and Stephen J. Wright\thanks{C. Lee and S. J. Wright are with the Computer Sciences Department, 1210 W. Dayton Street, University of Wisconsin, Madison, WI 53706, USA (e-mails: \url{ching-pei@cs.wisc.edu} and \url{swright@cs.wisc.edu}).}
}
\maketitle
\begin{abstract}
We propose an approach based on neural networks and the AC power flow
equations to identify single- and double-line outages in a power grid
using the information from phasor measurement unit sensors
(PMUs) \modify{ placed on only a subset of the buses.}
Rather than inferring the outage from the sensor data by
inverting the physical model, our approach uses the AC model to
simulate sensor responses to all outages of interest under multiple
demand and seasonal conditions, and uses the resulting data to train a
neural network classifier to recognize and discriminate between
different outage events directly from sensor data. After training,
real-time deployment of the classifier requires just a few
matrix-vector products and simple vector operations. These operations
can be executed much more rapidly than inversion of a model based on
AC power flow, which consists of nonlinear equations and possibly
integer / binary variables representing line outages, as well as the
variables representing voltages and power flows. We are motivated to
use neural network by its successful application to such areas as
computer vision and natural language processing. Neural networks
automatically find nonlinear transformations of the raw data that
highlight useful features that make the classification task easier.
\modify{We describe a principled way to choose sensor locations and
  show that accurate classification of line outages can be achieved
  from a restricted set of measurements, even over a wide range of
  demand profiles.}
\end{abstract}
\begin{IEEEkeywords}
line outage identification, phasor measurement unit, neural network,
optimal PMU placement
\end{IEEEkeywords}

\section{Introduction}
Phasor measurement units (PMUs) have been introduced in recent years
as instruments for monitoring power grids in real time.  PMUs provide
accurate, synchronized, real-time information of the voltage phasor at
30-60 Hz, as well as information about current flows. When processed
appropriately, this data has the potential to perform rapid
identification of anomalies in operation of the power system. In this
paper, we use this data to detect line outage events, discriminating
between outages on different lines.  This discrimination capability
(known in machine learning as ``classification'') is made possible by
the fact that the topological change to the grid resulting from a line
outage leads (after a transient period during which currents and
voltages fluctuate) to a new steady state of voltage and power
values. The pattern of voltage and power changes is somewhat
distinctive for different line outages. By gathering or simulating
many samples of these changes, under different load conditions, we can
train a machine-learning classifier to recognize each type of line
outage.  \modify{Further, given that it is not common in current
  practice to install PMUs on all buses, we extend our methodology to
  place a limited number of PMUs in the network in a way that
  maximizes the performance of outage detection, or to find optimal
  locations for {\em additional} PMUs in a network that is already
  instrumented with some PMUs.}

Earlier works on classification of line outages from PMU data are
based on a linear (DC) power flow model \cite{ZhuG12,CheLW14}, or make
use only of phasor angle changes \cite{TatO08,TatO09,AbdME12}, or
design a classifier that depends only linearly on the differences in
sensor readings before and after an event \cite{KimW16a}. These
approaches fail to exploit fully the modeling capabilities provided by
the AC power flow equations, the information supplied by PMUs, and the
power of modern machine learning techniques. Neural networks have the
potential to extract automatically from the observations information
that is crucial to distinguishing between outage events, transforming
the raw data vectors into a form that makes the classification more
accurate and reliable. Although the computational burden of training a
neural-network classifier is heavy, this processing can be done
``offline.'' The cost of deploying the trained classifier is
low. Outages can be detected and classified quickly, in real time,
possibly leading to faster remedial action on the grid, and less
damage to the infrastructure and to customers.  \modify{The idea of
  using neural networks on PMU data is also studied in \cite{ZhaCP16b}
  to detect multiple simultaneous line outages, in the case that PMU
  data from all buses are available along with data for power
  injections at all buses.}


The use of neural networks in deep learning is currently the subject
of much investigation. Neural networks have yielded significant
advances in computer vision and speech recognition, often
outperforming human experts, especially when the hidden information in
the raw input is not captured well by linear models.  The limitations
of linear models can sometimes be overcome by means of laborious
feature engineering, which requires expert domain knowledge, but this
process may nevertheless miss vital information hidden in the data
that is not discernible even by an expert.
We show below that, in this application to outage detection on power
grids, even generic neural network models are effective at classifying
outages accurately across wide ranges of demands and seasonal effects.
\modify{Previous works of data-based methods for outage detection only
  demonstrated outage-detecting ability of these models for a limited
  range of demand profiles. We show that neural network models can
  cope with a wider range of realistic demand scenarios, that
  incorporate seasonal, diurnal, and random fluctuations.  Although
  not explored in this paper, our methodology could incorporate
  various scenarios for power supply at generation nodes as well.}
\modify{We show too that effective outage detection can be achieved
  with information from PMUs at a limited set of network locations,
  and provide methodology for choosing these locations so as to
  maximize the outage detection performance.}

Our approach differs from most approaches to machine learning
classification in one important respect. Usually, the data used to
train a classifier is {\em historical} or {\em streaming}, gathered by
passive observation of the system under study.
Here, instead, we are able to {\em generate} the data as required, via
a high-fidelity model based on the AC power flow equations. Since we
can generate enough instances of each type of line outage to make them
clearly recognizable and distinguishable, we have an important
advantage over traditional machine learning. The role of machine
learning is thus slightly different from the usual setting. The
classifier serves as a proxy for the physical model (the AC power flow
equations), treating the model as a black box and performing the
classification task phenomenologically based on its responses to the
``stimuli'' of line outages. Though the offline computational cost of
training the model to classify outages is high, the neural network
proxy can be deployed rapidly, requiring much less online computation
than an inversion of the original model.

This work is an extension and generalization of \cite{KimW16a}, where
a linear machine learning model (multiclass logistic regression, or
MLR) is used to predict the relation between the PMU readings and the
outage event.  The neural-network scheme has MLR as its final layer,
but the network contains additional ``hidden layers'' that perform
nonlinear transformations of the raw data vectors of PMU readings. We
show empirically that the neural network gives superior classification
performance to MLR in a setting in which the electricity demands vary
over a wider range than that considered in
\cite{KimW16a}. \modify{(The wider range of demands casues the PMU
  signatures of each outage to be more widely dispersed, and thus
  harder to classify.)}
A similar approach to outage detection was discussed in
\cite{GarC16a}, using a linear MLR model, with PMU data gathered
during the transient immediately after the outage has occurred, rather
than the difference between the steady states before and after the
outage, as in \cite{KimW16a}.  Data is required from all buses in
\cite{GarC16a}, whereas in \cite{KimW16a} and in the present paper, we
consider too the situation in which data is available from only a
subset of PMUs.

\modify{Another line of work that uses neural networks for outage
  detection is reported in \cite{ZhaCP16b} (later expanded into the
  report \cite{ZhaCP17a}, which appeared after the original version of
  this paper was submitted). The neural networks used in
  \cite{ZhaCP16b,ZhaCP17a} and in our paper are similar in having a
  single hidden layer. However, the data used as inputs to the neural
  networks differs. We use the voltage angles and magnitudes reported
  by PMUs, whereas \cite{ZhaCP16b,ZhaCP17a} use only voltage angles
  along with power injection data at all buses. Moreover,
  \cite{ZhaCP16b,ZhaCP17a} require PMU data from {\em all} buses,
  whereas we focus on identifying a subset of PMU locations that
  optimizes classification performance. A third difference is that
  \cite{ZhaCP16b,ZhaCP17a} aim to detect multiple, simultaneous line
  outages using a multilabel classification formulation, while we aim
  to identify only single- or simultaneous double-line outages. The
  latter are typically the first events to occur in a large-scale grid
  failure, and rapid detection enables remedial action to be taken. We
  note too that PMU data is simulated in \cite{ZhaCP16b,ZhaCP17a} by
  using a DC power flow model, rather than our AC model, and that a
  variety of power injections are obtained in the PMU
  data not by varying over a plausible range of seasonal and diurnal
  demand/generation variations (as we do) but rather by perturbing voltage angles
  randomly and inferring the effects of these perturbations on power
  readings at the buses.}

This paper is organized as follows. In Section~\ref{sec:sparse}, we
give the mathematical formulation of the neural network model, and the
regularized formulation that can be used to determine optimal PMU
placement. We then discuss efficient optimization algorithms for
training the models in Section~\ref{sec:alg}. Computational
experiments are described Section~\ref{sec:exp}.  \arxiv{A convergence
  proof for the optimization method is presented in the Appendix.}

\section{Neural Network and Sparse Modeling}
\label{sec:sparse}
In this section, we discuss our approach of using neural network
models to identify line outage events from PMU change data, and extend
the formulation to find optimal placements of PMUs in the network.
(We avoid a detailed discussion of the AC power flow model.)
We use the following notation for outage event.
\begin{itemize}[leftmargin=*]
\parskip 0pt
\item $y_i$ denotes the outage represented by event $i$. It takes a
  value in the set $\{1,\dotsc,K\}$, where $K$ represents the total
  number of possible outage events (roughly equal to the number of
  lines in the network that are susceptible to failure).
\item $\bx_i \in \R^d$ is the vector of differences between the
  pre-outage and post-outage steady-state PMU readings,
\end{itemize}
In the parlance of machine learning, $y_i$ is known as a {\em label}
and $\bx_i$ is a {\em feature vector}. Each $i$ indexes a single item
of data; we use $n$ to denote the total number of items, which is a
measure of the size of the data set.

\subsection{Neural Network}

A neural network is a machine learning model that transforms the data
vectors $\bx_i$ via a series of transformations (typically linear
transformations alternating with simple component-wise nonlinear
transformations) into another vector to which a standard linear
classification operation such as MLR is applied. The transformations
can be represented as a network. The nodes in each layer of this
network correspond to elements of an intermediate data vector;
nonlinear transformations are performed on each of these elements. The
arcs between layers correspond to linear transformations, with the
weights on each arc representing an element of the matrix that
describes the linear transformation.  The bottom layer of nodes
contains the elements of the raw data vector while a ``softmax''
operation applied to the outputs of the top layer indicates the
probabilities of the vector belonging to each of the $K$ possible
classes. The layers / nodes strictly between the top and bottom layers
are called ``hidden layers'' and ``hidden nodes.''

A neural network is \emph{trained} by determining values of the
parameters representing the linear and nonlinear transformations such
that the network performs well in classifying the data objects
$(\bx_i,y_i)$, $i=1,2,\dotsc,n$. More specifically, we would like the
probability assigned to node $y_i$ for input vector $\bx_i$ to be
close to $1$, for each $i=1,2,\dotsc,n$.  The linear transformations
between layers are learned from the data, allowing complex
interactions between individual features to be captured. Although deep
learning lacks a satisfying theory, the layered structure of the
network is thought to mimic gradual refinement of the information, for
highly complicated tasks.  
\canomit{For example, in image classification,
successive layers aggregate the information from pixels into lines and
shapes, and eventually a recognizable object.} In our current
application, we expect the relations between the input features ---
the PMU changes before / after an outage event --- to be related to
the event in complex ways, making the choice of a neural network model
reasonable.

Training of the neural network can be formulated as an optimization
problem as follows. Let $N$ be the number of hidden layers in the
network, with $d_1,d_2,\dotsc, d_N\geq 0$ being the number of hidden
nodes in each hidden layer. ($d_0=d$ denotes the dimension of the raw
input vectors\canomit{ $\bx_i$, $i=1,2,\dotsc,n$}, while $d_{N+1}=K$ is the
number of classes.) We denote by $W_j$ the matrix of dimensions
$d_{j-1} \times d_j$ that represents the linear transformation of
output of layer $j-1$ to the input of layer $j$. The nonlinear
transformation that occurs within each layer is represented by the
function $\sigma$. With some flexibility of notation, we obtain
$\sigma(\bx)$ by applying the same transformation to each component of
$\bx$. In our model, we use the $\tanh$ function, which transforms
each element $\nu \in \R$ as follows:
\begin{equation} \label{eq:tanh}
\nu \to (e^\nu - e^{-\nu})/(e^{\nu} + e^{-\nu}).
\end{equation}
(Other common choices of $\sigma$ include the sigmoid function $\nu
\to 1/(1+e^{-\nu})$ and the rectified linear unit $\nu \to
\max(0,\nu)$.)  This nonlinear transformation is not applied at the
output layer $N+1$; the outputs of this layer are obtained by
applying an MLR classifier to the outputs of layer $N$.

Using this notation, together with $[n] = \{1,2,\dotsc,n\}$ and $[N] =
\{ 1,2,\dotsc,N\}$, we formulate the training problem as:
\begin{equation} \label{eq:nn}
 \min\nolimits_{W_1,W_2,\dotsc,W_{N+1}} \,  f(W_1,
W_2, \dotsc, W_{N+1}),
\end{equation}
where the objective is defined by 
\begin{subequations} \label{eq:f}
\begin{alignat}{2}
\label{eq:f.1}
f(W_1,\dotsc,W_{N+1})
	& := \sum_{i=1}^n \ell(\bx_i^{N+1}, && y_i) + \frac{\epsilon}{2}
	\sum_{j=1}^{N+1} \|W_j\|_F^2, \\
\mbox{subject to} \;\; \bx_i^{N+1} &= W_{N+1}\bx_i^{N}, \; && i \in [n], \\
\label{eq:f.3}
  \bx_i^j &= \sigma(W_{j} \bx_i^{j-1}), \; && i \in [n],\, j  \in [N], \\
 \bx_i^0 &= \bx_i, \; && i \in [n],
\end{alignat}
\end{subequations}
for some given regularization parameter $\epsilon \ge 0$ and Frobenius
norm $\| \cdot \|_F$, and nonnegative convex loss function $\ell$. \footnote{We
chose a small positive value $\epsilon=10^{-8}$ for our experiments,
as a positive value is required for the convergence theory\arxiv{; see in particular
Lemma~\ref{lem:bddgrad} in the Appendix}. The computational results
were very similar for $\epsilon=0$, however.} We use the constraints
in \eqref{eq:f} to eliminate intermediate variables $\bx_i^j$,
$j=1,2,\dotsc,N+1$, so that indeed \eqref{eq:nn} is an unconstrained
optimization problem in $W_1,W_2,\dotsc,W_{N+1}$.  The loss function
$\ell$ quantifies the accuracy which with the neural network predicts
the label $y_i$ for data vector $\bx_i$. As is common, we use the MLR
loss function, which is the negative logarithm of the softmax
operation, defined by
\begin{equation} \label{eq:def.l}
	\ell(\bz,y_i) := -\log \left( \frac{e^{z_{y_i}}}{ \sum_{k=1}^K e^{z_i}} \right) =
-z_{y_i} + \log \left( \sum\nolimits_{k=1}^K e^{z_i} \right),
\end{equation}
where $\bz = (z_1,z_2,\dotsc,z_K)^T$.  Since for a transformed data
vector $\bz$, the neural network assigns a probability proportional to
$\exp(z_k)$ for each outcome $k=1,2,\dotsc,K$, this function is
minimized when the neural network assigns zero probabilities to the
incorrect labels $k \ne y_i$.

In practice, we add ``bias'' terms at each layer, so that the
transformations actually have the form
\[
\bx_i^{j-1} \to W_j \bx_i^{j-1} + w_j,
\]
for some parameter $w_j \in \R^{d_j}$. We omit this detail from our
description, for simplicity of notation.

Despite the convexity of the loss function $\ell$ as a function of its
arguments, the overall objective \eqref{eq:f} is generally nonconvex
as a function of $W_1,W_2,\dotsc,W_{N+1}$, because of the nonlinear
transformations $\sigma$ in \eqref{eq:f.3}, defined by
\eqref{eq:tanh}.

\subsection{Inducing Sparsity via Group-LASSO Regularization}

In current practice, PMU sensors are attached to only a subset of
transmission lines, typically near buses. We can modify the
formulation of neural network training to determine which PMU
locations are most important in detecting line outages. Following
\cite{KimW16a}, we do so with the help of a nonsmooth term in the
objective that penalizes the use of each individual sensor, thus
allowing the selection of only those sensors which are most important
in minimizing the training loss function \eqref{eq:f}. This penalty
takes the form of the sum of Frobenius norms on submatrices of $W_1$,
where each submatrix corresponds to a particular sensor. Suppose that
$G_s \subset \{1,2,\dotsc,d\}$ is the subset of features in $\bx_i$
that are obtained from sensor $s$. If the columns $j \in G_s$ of the
matrix $W_1$ are zero, then these entries of $\bx_i$ are ignored ---
the products $W_1 \bx_i$ will be independent of the values $(\bx_i)_j$
for $j \in G_s$ --- so the sensor $s$ is not needed.  Denoting by $I$
a set of sensors, we define the regularization term as follows:
\begin{subequations} \label{eq:def.cr}
\begin{align}
\label{eq:def.c}
	c(W_1,I) &:= \sum\nolimits_{s \in I} r(W_1,G_s),
	\text{ where }\\
\label{eq:def.r}
	r(W_1,G_s) &:= \sqrt{\sum\nolimits_{i=1}^{d_1} \sum_{j \in G_s} (W_1)_{i,j}^2} =
\left\| (W_1)_{\cdot G_s} \right\|.
\end{align}
\end{subequations}
(We can take $I$ to be the full set of sensors or some subset, as
discussed in Subsection~\ref{sec:pmu_selection}.)  This form of
regularizer is sometimes known as a group-LASSO
\cite{malioutov2005sparse,MeiVB08a,WriNF09a}. With this regularization
term, the objective in \eqref{eq:nn} is replaced by
\begin{equation} \label{eq:reg-loss}
 L_I(W) := f(W_1, \dotsc, W_N) + \tau c(W_1, I),
\end{equation}
for some tunable parameter $\tau \geq 0$. A larger $\tau$ induces more
zero groups (indicating fewer sensors) while a smaller value of
$\tau$ tends to give lower training error at the cost of using more
sensors.  Note that no regularization is required on $W_i$ for $i>1$,
since $W_1$ is the only matrix that operates directly on the vectors
of data from the sensors.

We give further details on the use of this regularization in
choosing PMU locations in Subsection~\ref{sec:pmu_selection} below.
Once the desired subset has been selected, we drop the regularization
term and solve a version of \eqref{eq:nn} in which the columns of
$W_1$ corresponding to the sensors not selected are fixed at zero.


\section{Optimization and Selection Algorithms}
\label{sec:alg}

Here we discuss the choice of optimization algorithms for solving the
training problem \eqref{eq:nn} and its regularized version
\eqref{eq:reg-loss}. We also discuss strategies that use the
regularized formulation to select PMU locations, \modify{when we are
  only allowed to install PMUs on a pre-specified number of buses}.

\subsection{Optimization Frameworks}
\label{sec:opt.frameworks}

\begin{algorithm}
	\caption{Greedy heuristic for feature selection}
	\label{alg:greedy}
	Given $\epsilon,\tau > 0$, $\text{\#max\_group} \in \N$, set $I$ of possible sensor locations, and disjoint groups $\{G_s\}$ such that
	$\bigcup_{s \in I} G_s \subset \{1,\dots, d\}$\;\
	Set $G \leftarrow \emptyset$; \\
	\For{$k=1,\dotsc, $\textit{\#max\_group}}{
		\eIf{$k > 1$}{
			Let the initial point be the solution from the previous
			iteration\;}{
			Randomly initialize  $W_i \in \R^{d_{i-1} \times d_{i}}$,
			$i \in [N+1]$\;}
		Approximately solve \eqref{eq:reg-loss} with the given $\tau$ and the
		current $I$ by SpaRSA\;
		$\tilde{s} := \arg\max_{s\in I}\quad r(W_1, G_s)$\;
		\If{$r(W_1, G_{\tilde{s}}) = 0$}{
			Break\;}
		$I \leftarrow I \setminus \tilde{s}$, $G \leftarrow G \cup
		\{\tilde{s} \}$\;
	}
	Output $G$ as the selected buses and terminate\;
\end{algorithm}
We solve the problem \eqref{eq:nn} with the popular L-BFGS algorithm
\cite{LiuN89a}.  Other algorithms for smooth nonlinear optimization
can also be applied; we choose L-BFGS because it requires only
function values and gradients of the objective, and because it has
been shown in \cite{NgiCL11a} to be efficient for solving neural
network problems. To deal with the nonconvexity of the objective, we
made slight changes of the original L-BFGS, following an
idea in \cite{LiF01a}. Denoting by $\bs_t$ the difference between the
iterates at iterations $t$ and $t+1$, and by $\by_t$ the difference
between the gradients at these two iterations, the pair 
$(\bs_t,\by_t)$ is not used in computing subsequent search
directions if $\bs_t^T \by_t \ll \bs_t^T \bs_t$. This strategy ensures
that the Hessian approximation remains positive definite, so the
search directions generated by L-BFGS will be descent directions.

We solve the group-regularized problem \eqref{eq:reg-loss} using
SpaRSA \citep{WriNF09a}, a proximal-gradient method that requires only
the gradient of $f$ and an efficient proximal solver for the
regularization term.  As shown in \cite{WriNF09a}, the proximal
problem associated with the group-LASSO regularization has a closed
form solution that is inexpensive to compute.

In the next section, we discuss details of two bus selection
approaches, and how to compute the gradient of $f$ efficiently.

\subsection{Two Approaches for PMU Location}
\label{sec:pmu_selection}

We follow \cite{KimW16a} in proposing two approaches for selecting PMU
locations.  In the first approach, we set $I$ in \eqref{eq:reg-loss}
to be the full set of potential PMU locations, and try different
values of the parameter $\tau$ until we find a solution that has the
desired number of nonzero submatrices $(W_1)_{\cdot j}$ for $j \in I$,
which indicate the chosen PMU locations.

The second approach is referred to as the ``greedy heuristic'' in
\cite{KimW16a}.  We initialize $I$ to be the set of candidate
locations for PMUs. (We can exclude from this set locations that are
already instrumented with PMUs and those that are not to be considered
as possible PMU locations.) We then minimize \eqref{eq:reg-loss} with this
$I$, and select the index $s$ that satisfies
\begin{equation*}
	s = \arg\max\nolimits_{s \in I}\quad r(W_1, G_s)
\end{equation*}
as the next PMU location. This $s$ is removed from $I$, and we
minimize \eqref{eq:reg-loss} with the reduced $I$.  This process is
repeated until the required number of locations has been selected.
The process is summarized in Algorithm \ref{alg:greedy}.

\subsection{Computing the Gradient of the Loss Function}

In both SpaRSA and the modified L-BFGS algorithm, the gradient and the
function value of $f$ defined in \eqref{eq:f} are needed at every
iteration.  We show how to compute these two values efficiently given
any iterate $W=(W_1,W_2,\dotsc,W_{N+1})$.  Function values are
computed exactly as suggested by the constraints in \eqref{eq:f}, by
evaluating the intermediate quantities $\bx_i^j$, $j \in [N+1]$, $i
\in [n]$ by these formulas, then finally the summation in
\eqref{eq:f.1}. The gradient involves an adjoint calculation. By
applying the chain rule to the constraints in \eqref{eq:f}, treating
$\bx_i^j$, $j \in [N+1]$, as variables alongside
$W_1,W_2,\dotsc,W_{N+1}$, we obtain
\begin{subequations} \label{eq:grads}
\begin{align}
	\nabla_{W_{N+1}}f &= \sum_{i=1}^n
	\nabla_{\bx_i^{N+1}}\ell(\bx_i^{N+1}, y_i) (\bx_i^N)^T
	+ \epsilon W_{N+1},
	\label{eq:WN}\\
	\nabla_{\bx^N_i} f &= \nabla_{\bx^{N+1}_i}
		\ell(\bx_i^{N+1}, y_i) W_{N+1}^T,
		\label{eq:dummy} \\
	\nabla_{\bx_i^j} f &=\nabla_{\bx^{j+1}_i} f\cdot \sigma'(W_{j+1}\bx^j_i)W_{j+1}^T , \label{eq:xi} \\
& \qquad j=N-1,\dotsc,0, \nonumber
		\\
		\nabla_{W_{j}} f &= \sum_{i=1}^n
		\nabla_{\bx^j_i} f\cdot
		\sigma'(W_{j}\bx^{j-1}_i)(\bx_i^{j-1})^T
		+ \epsilon W_{j},\label{eq:Wi} \\
		&\qquad j=1,\dotsc,N. \nonumber
\end{align}
\end{subequations}
Since $\sigma$ is a pointwise operator that maps $\R^{d_i}$ to
$\R^{d_i}$, $\sigma'(\cdot)$ is a diagonal matrix such that
$\sigma'(\bz)_{i,i} = \sigma'(z_i)$.  The quantities $\sigma'()$ and
$\bx_i^j$, $j=1,2,\dotsc,N+1$ are computed and stored during the
calculation of the objective. Then, from \eqref{eq:dummy} and
\eqref{eq:xi}, the quantities $\nabla_{\bx^j_i} f$ from $j=N,N-1,
\dotsc,0$ can be computed in a reverse recursion.  Finally, the
formulas \eqref{eq:Wi} and \eqref{eq:WN} can be used to compute the
required derivatives $\nabla_{W_j} f$, $j=1,2,\dotsc,N+1$.

\subsection{Training and Validation Procedure}

In accordance with usual practice in statistical analysis involving
regularization parameters, we divide the available data into a {\em
  training set} and a {\em validation set}. The training set is a
randomly selected subset of the available data --- the pairs
$(\bx_i,y_i)$, $i=1,2,\dotsc,n$ in the notation above --- that is used
to form the objective function whose solution yields the parameters
$W_1,W_2,\dotsc,W_{N+1}$ in the neural network. The validation set
consists of further pairs $(\bx_i,y_i)$ that aid in the choice of the
regularization parameter, which in our case is the parameter $\tau$ in
the greedy heuristic procedure of Algorithm~\ref{alg:greedy},
\modify{described in Sections~\ref{sec:opt.frameworks} and
  \ref{sec:pmu_selection}.}
We apply the greedy heuristic for $\tau \in
\{2^{-8},2^{-7},\dotsc,2^7,2^8\}$ and deem the optimal value to be the
one that achieves the most accurate outage identification on the
validation set.  We select initial points for the training randomly,
so different solutions $W_1,W_2,\dotsc,W_{N+1}$ may be obtained even
for a single value of $\tau$.  To obtain a ``score'' for each value of
$\tau$, we choose the best result from ten random starts. The final
model is then obtained by solving \eqref{eq:nn} over the buses
selected on the best of the ten validation runs, that is, fixing the
elements of $W_1$ that correspond to non-selected buses at zero.


Note that validation is not needed to choose the value of $\tau$ when
we solve the regularized problem \eqref{eq:reg-loss} directly, because
in this procedure, we adjust $\tau$ until a predetermined number of
buses is selected.


There is also a {\em testing set} of pairs $(\bx_i,y_i)$. This is data
that is used to evaluate the bus selections produced by the procedures
above. In each case, the tuned models obtained on the selected buses
are evaluated on the testing set.


\section{Experiments} \label{sec:exp}

We perform simulations based on grids from the IEEE test set archive
\citep{IEEE14a}. Many of our studies focus on the IEEE-57bus
case. Simulations of grid response to varying demand and outage
conditions are performed using MATPOWER \citep{ZimM11a}.  \modify{We
  first show that high accuracy can be achieved easily when PMU
  readings from all buses are used. We then focus on the more
  realistic (but more difficult) case in which data from only a limited
  number of PMUs is used. In both cases, we simulate PMU readings over
  a wide range of power demand profiles that encompass the profiles
  that would be seen in practice over different seasons and at
  different times of day.}

\subsection{Data Generation}
\label{subsec:data}
We use the following procedure from \cite{KimW16a} to generate the
data points using a stochastic process and MATPOWER.
\begin{enumerate}[leftmargin=*]
\item We consider the full grid defined in the IEEE specification, and
  also the modified grid obtained by removing each transmission line
  in turn.
\item For each demand node, define a baseline demand value from the
  IEEE test set archive as the average of the load demand over 24
  hours.
\item \modify{To simulate different ``demand averages'' for different seasons,
  we scale the baseline demand value for each node by the values in
  $\{0.5,0.75,1,1.25,1.5\}$, to yield five different baseline demand
  averages for each node.}  (Note: In \cite{KimW16a}, a narrower range
  of multipliers was used, specifically $\{0.85,1,1.15\}$, but each
  multiplier is considered as a different independent data set.)
\item Simulate a 24-hour fluctuation in demand by an adaptive
  Ornstein-Uhlenbeck process as suggested in \cite{PerK11a},
  independently and separately on each demand bus.
\item \modify{This fluctuation is overlaid on the demand average for each bus
	to generate a 24-hour load demand profile.}
  \item \modify{Obtain training, validation, and test points from these
    24-hour demand profiles for each node by selecting different
timepoints from this 24-hour period, as described below.}
\item If any combination of line outage and demand profile yields a
  system for which MATPOWER cannot identify a feasible solution for
  the AC power flow equations, we do not add this point to the data
  set. Lines connecting the same pair of buses are considered as a
  single line; we take them to be all disconnected or all connected.
\end{enumerate}
This procedure was used to generate training, validation,
and test data. In each category, we generated equal numbers of training
points for each feasible case in each of the five scale factors $\{0.5,0.75,1,1.25,1.5\}$.
For each feasible topology and each combination of parameters above,
we generate $20$ training points from the first 12 hours of the
24-hour simulation period, and $10$ validation points and $50$ test
points from the second 12-hour period.
Summary information about the IEEE power systems we use in the
experiments with single line outage is shown in Table~\ref{tbl:data}.
The column ``Feas.'' shows the number of lines whose removal still
result in a feasible topology for at least one scale factor, while the
number of lines whose removal result in infeasible topologies for all
scale factors or are duplicated is indicated in the column
``Infeas./Dup.''  The next three columns show the number of data
points in the training / validation / test sets. As an example: The
number of training points for the 14-Bus case (which is 1840) is
approximately 19 (number of feasible line removals) times 5 (number of
demand scalings) times 20 (number of training points per
configurations). The difference between this calculated value of
$1900$ and the $1840$ actually used is from that the numbers of
feasible lines under different scaling factors are not identical, and
higher scaling factors resulted in more infeasible cases.  The last
column in Table~\ref{tbl:data} shows the number of components in each
feature vector $\bx_i$. There are two features for each bus, being
changes in phase angle and voltage magnitude with respect to the
original grid under the same demand conditions. There are another two
additional features in all cases, one indicating the power generation
level (expressed as a fraction of the long-term average), and the
other one indicating a bias term manually added to the data.
\begin{table}
	\caption{The systems used in our experiment and statistics of the synthetic
	data.}
	\label{tbl:data}
	\centering
	\begin{tabular}{@{}l|r|r|r|r|r|r@{}}
		System & \multicolumn{2}{c|}{\#lines} & \#Train & \#Val &
		\#Test & \#Features\\
		\hline
		& Feas. & Infeas./Dup. & & & &\\
		\hline
		14-Bus & 19 & 1 & 1,840 & 920 & 4,600 & 30\\
		\hline
		30-Bus & 38 & 3 & 3,680 & 1,840 & 9,200 & 62 \\
		\hline
		57-Bus & 75 & 5 & 5,340 & 2,670 & 13,350 & 116 \\
		\hline
		118-Bus & 170 & 16 & 16,980 & 8,490 & 42,450 & 238\\
	\end{tabular}
\end{table}

\subsection{Neural Network Design} \label{sec:nn.design}


Configuration and design of the neural network is critical to
performance in many applications. In most of our
experiments, we opt for a simple design in which there is just a
single hidden layer: $N=1$ in the notation of \eqref{eq:nn}. We assume
that the matrices $W_1$ and $W_2$ are dense, that is, all nodes in any one
layer are connected to all nodes in adjacent layers. It remains to
decide how many nodes $d_1$ should be in the hidden layer. Larger values
of $d_1$ lead to larger matrices $W_1$ and $W_2$ and thus more
parameters to be chosen in the training process. However, larger $d_1$
can raise the possibility of overfitting the training
data, producing solutions that perform poorly on the other, similar
data in the validation and test sets.

We did an experiment to indicate whether overfitting could be an issue
in this application. We set $d_1=200$, and solved the unregularized
training problem \eqref{eq:nn} using the modified L-BFGS algorithm
with $50,000$ iterations.  Figure~\ref{fig:light} represents the
output of each of the 200 nodes in the hidden layer for each of the
$13,350$ test examples. Since the output is a result of the $\tanh$
transformation \eqref{eq:tanh} of the input, it lies in the range
$[-1,1]$. We color-code the outputs on a spectrum from red to blue,
with red representing $1$ and blue representing $-1$. A significant
number of columns are either solid red or solid blue. The hidden-layer
nodes that correspond to these columns play essentially no role in
distinguishing between different outages; similar results would be
obtained if they were simply omitted from the network. The presence of
these nodes indicates that the training process avoids using all $d_1$
nodes in the hidden layer, if fewer than $d_1$ nodes suffice to attain a
good value of the training objective. Note that overfitting is avoided
at least partially because we stop the training procedure with a
rather small number of iterations, which can be viewed as another type
of regularization \citep{CarLG01a}.

In our experiments, we used $d_1=200$ for the larger grids (57 and 114
buses) and $d_1=100$ for the smaller grids (14 and 30 buses).  The
maximum number of L-BFGS iterations for all neural networks is set to
$50,000$, while for MLR models we terminate it either when the number of iterations reaches $500,000$ or when the gradient is
smaller than a pre-specified value ($10^{-3}$ in our experiments), as
linear models do not suffer much from overfitting.

\begin{figure}
\centering
\includegraphics[width=\linewidth,height=.7in]{./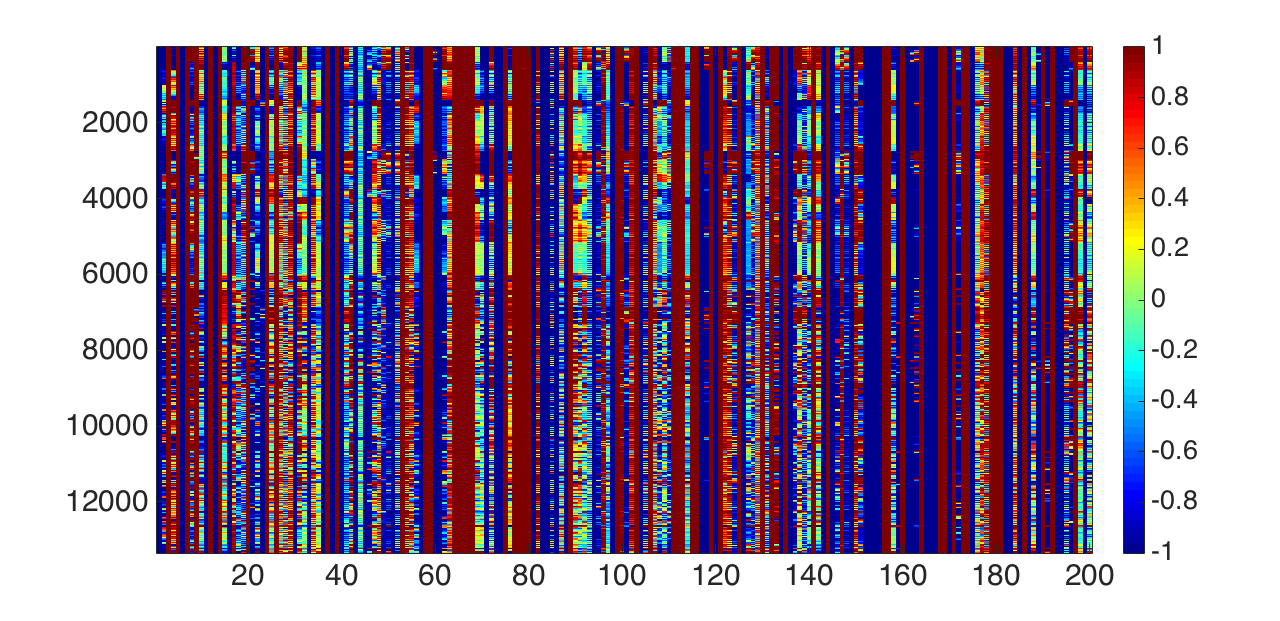}
\caption{Output of the hidden layer nodes of a one-layer neural
  network with $200$ hidden nodes applied to the problem of detecting
  line outages on the IEEE 57-bus grid.  Columns with a single color
  (dark red or dark blue) indicate nodes that output
  the same value regardless of the feature vector $\bx_i$ that was
  input into the neural network. Such nodes play little or no
  role in discriminating between different line outages.}
	\label{fig:light}
\end{figure}

\subsection{Results on All Buses}

We first compare the results between linear multinomial logistic
regression (MLR) (as considered in \cite{KimW16a}) and a fully
connected neural network with one hidden layer, where the PMUs are
placed on all buses. Because we use all the buses, no validation phase
is needed, because the parameter $\tau$ does not appear in the model.
Table~\ref{tbl:full} shows error rates on the testing set.
We see that in the difficult cases, when the linear model has error
rates higher than $1\%$, the neural network obtains markedly better
testing error rates.

\begin{table}
\begin{center}
\caption{PMUs on all buses: Test error rates for single-line outage.}
\label{tbl:full}
\begin{tabular}{l|r|r|r|r}
Buses& 14 & 30 & 57 & 118 \\
\hline
Linear MLR & 0.00\% & 1.76\% & 4.50\% & 15.19\% \\
Neural network& 0.43\% & 0.03\% & 0.91\% & 2.28\%
\end{tabular}
\end{center}
\end{table}

\subsection{Results on Subset of Buses}

We now focus on the 57-bus case, and apply the greedy heuristic
(Algorithm~\ref{alg:greedy}) to select a subset of buses for PMU
placement, for the neural network with one hidden layer of 200 nodes.
We aim to select 10 locations. Figure~\ref{fig:group} shows the
locations selected at each run. Values of $\tau$ used were $\{2^{-8},
2^{-7}, \dotsc, 2^8\}$, with ten runs performed for each value of
$\tau$. On some runs, the initial point is close to a bad local
optimum (or saddle point) and the optimization procedure terminates
early with fewer than $10 \times 2$ columns of non-zeros in $W_1$
(indicating that fewer than 10 buses were selected, as each bus
corresponds to 2 columns). The resulting models have poor performance,
and we do not include them in the figure.

Even though the random initial points are different on each run, the
groups selected for a fixed $\tau$ tend to be similar on all runs when
$\tau \le 2$.  For larger values of $\tau$, including the value $\tau
= 2^4$ which gives the best selection performance, the locations
selected on different runs are often different.
(For the largest values of $\tau$, fewer than 10 buses are selected.)

Table~\ref{tbl:sparse} shows testing accuracy for the ten PMU
locations selected by both the greedy heuristic and regularized
optimization with a single well-chosen value of $\tau$. Both the
neural network and the linear MLR classifiers were tried. The groups of
selected buses are shown for each case. These differ significantly; we
chose the ``optimal'' group from among these to be the one with the
best validation score.  \canomit{The L-BFGS optimization procedure was run on
the selected group of buses for 50,000 iterations on the neural
network model, and 500,000 on the linear case (or terminated when the
gradient is too small).}  We note the very specific choice of $\tau$
for linear MLR (group-LASSO). In this case, the number of groups
selected is extremely sensitive to $\tau$. In a very small range
around $\tau=14.4898999$, the number of buses selected varies between
8 and 12.
\smodify{We report two types of error rates here. In the column
  ``Err. (top1)'' we report the rate at which the outage that was
  assigned the highest probability by the classifier was not the outage
  that actually occurred. In ``Err. (top2)'' we score an error only if
  the true outage was not assigned either the highest or the second-highest
  probability by the classifier.  We note here that ``top1'' error
  rates are much higher than when PMU data from all buses is used,
  although that the neural network yields significantly better results
  than the linear classifier. However, ``top2'' results are excellent
  for the neural network when the greedy heuristic is used to select
  bus location.}

\begin{table*}
	\centering
	\caption{Comparison of different approaches for selecting $10$ buses
	on the IEEE 57-bus case, after $50,000$ iterations for neural networks and $500,000$ iterations for linear MLR models.}
	\label{tbl:sparse}
	\begin{tabular}{l|r|r|r|r}
		Model & $\tau$ & Buses selected & Err. (top1)  & \modify{Err.
		(top2) }\\
		\hline
		Linear MLR (greedy) & $2$ & [5 16 20 31 40 43 44 51 53 57] &
		29.7\% &\modify{ 8.4\%}\\
		Neural Network (greedy) & $16$ & [5 20 31 40 43 50 51 53 54
		57] & 7.1\% & \modify{0.1\% }\\
		\hline
		Linear MLR (group-LASSO) & $14.4898999$ & [2 4 5 6 7 8 18 27 28 29] &
		54.4\% & \modify{39.4\% }\\
		Neural Network (group-LASSO) & $48$ & [4 5 6 7 8 18 26
		27 28 55] & 24.1\% & \modify{12.9\% }\\
	\end{tabular}
\end{table*}

\begin{table*}
	\centering
	\caption{Comparison of different approaches for selecting $14$ buses
	on the IEEE 57-bus case,  after $50,000$ iterations for neural networks and $500,000$ iterations for linear MLR models.}
	\label{tbl:sparse14}
	\begin{tabular}{l|r|r|r|r}
	Model & $\tau$ & Buses selected & Err. (top1) & \modify{Err. (top2)} \\
		\hline
		Linear MLR (greedy) & $2$ & [5 16 17 20 26 31 39 40 43 44 51
		53 54 57] &
		21.8\% &\modify{ 3.8\%}\\
		Neural Network (greedy) & $16$ & [5 6 16 24 27 31 39 40 42 50
		51 52 53 54] & 5.2\% &\modify{ 0.3\% }\\
		\hline
		Linear MLR (group-LASSO) & $13$ & [2 4 5 7 8 17 18 27 28 29 31
		32 33 34] &
		42.1\% & \modify{25.3\%}\\
		Neural Network (group-LASSO) & $44$ & [4 7 8 18 24 25 26 27 28
		31 32 33 39 40] & 6.2\% & \modify{0.6\% }\\
	\end{tabular}
\end{table*}

\begin{figure}
	\begin{center}
	\includegraphics[width=\linewidth,height=.5in]{./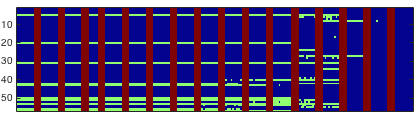}
	\caption{Groups selected on the $57$-bus case for different
          runs and different values of $\tau$ in the greedy
          heuristic applied on the neural network problem \eqref{eq:reg-loss}. Each row represents a group and each column
          represents a run. Ten runs are plotted for each value of
          $\tau$. From left to right (separated by brown vertical
          lines), these values are $\tau = 2^{-8},2^{-7},\dotsc,
          2^8$. Green indicates selected groups; dark blue are groups
          not selected.}
	\label{fig:group}
\end{center}
\end{figure}

Table~\ref{tbl:sparse14} repeats the experiment of
Table~\ref{tbl:sparse}, but for 14 selected buses rather than
10. Again, we see numerous differences between the subsets of buses
selected by the greedy and group-LASSO approaches, for both the linear
MLR and neural networks. The neural network again gives significantly
better test error rates than the linear MLR classifier, \smodify{and
  the ``top2'' results are excellent for the neural network, for both
  group-LASSO and greedy heuristics.} Possibly the most notable
difference with Table~\ref{tbl:sparse} is that the buses selected by
the group-LASSO network for the neural network gives much better
results for 14 buses than for 10 buses.  However, since it still
performs worse than the greedy heuristic, the group-LASSO approach is
not further considered in later experiments.

\subsection{Why Do Neural Network Models Achieve Better Accuracy?} 
\label{sec:why}

\begin{table*}
	\centering
	\caption{Instance distribution before and after neural network
		transformation for the IEEE 57-bus data set. In the last two
		columns, 10 buses are selected by the Greedy heuristic.}
	\label{tbl:transform2}
	\begin{tabular}{@{}l@{}|ccc@{}}
		& Full PMU Data & Selected PMUs
		& Selected PMUs, after neural network transformation\\
		\hline
		mean $\pm$ std dev. distance to centroid & $0.30 \pm 0.14$ & $0.27 \pm 0.12 $ & $2.30 \pm 1.01$ \\
		mean $\pm$ std dev. between-centroid distance & $0.17 \pm 0.14$ & $0.08 \pm 0.05$ & $3.27 \pm 1.10$
	\end{tabular}
\end{table*}

Reasons for the impressive effectiveness of neural networks in certain
applications are poorly understood, and are a major research topic 
in machine learning. For this specific problem, we compare the
distribution of the raw feature vectors with the distribution of
feature vectors obtained after transformation by the hidden layer. The
goal is to understand whether the transformed vectors are in some
sense more clearly separated and thus easier to classify than the
original data.

We start with some statistics of the clusters formed by feature
vectors of the different classes. For purposes of discussion, we
denote $\bx_i$ as
the feature vector, which could be the full set of PMU readings, the
reduced set obtained after selection of a subset of PMU locations, or
the transformed feature vector obtained as output from the hidden
layer, according to the context. For each $j\in \{1,2,\dotsc,K\}$, we gather all those feature vectors
$\bx_i$ with label $y_i=j$, and denote the centroid of this cluster by
$c_j$. We track two statistics: the mean / standard deviation of the
distance of feature vectors $\bx_i$ to their cluster centroids, that
is, $\| \bx_i - c_{y_i} \|$ for $i=1,2,\dotsc,n$; and the mean /
standard deviation of distances between cluster centroids, that is,
$\|c_j-c_k\|$ for $j,k \in \{1,2,\dotsc,K\}$. We analyze these
statistics for three cases, all based on the IEEE 57-Bus network:
first, when $\bx_i$ are vectors containing full PMU data; second, when
$\bx_i$ are vectors containing the PMU data from the 10 buses selected
by the Greedy heuristic; third, the same data vectors as in the second
case, but after they have been transformed by the hidden layer
of the neural network.

Results are shown in Table~\ref{tbl:transform2}. For the raw data
(first and second columns of the table), the distances within clusters
are typically smaller than distances between centroids. (This happens
because the feature vectors within each class are ``strung out''
rather than actually clustered, as we see below.) For the transformed
data (last column) the clusters are generally tighter and more
distinct, making them easier to distinguish.


Visualization of the effects of hidden-layer transformation is
difficult because of the high dimensionality of the feature
vectors. Nevertheless, we can gain some insight by projecting into
two-dimensional subspaces that correspond to some of the leading
principal components, which are the vectors obtained from the singular
value decomposition of the matrix of all feature vectors $\bx_i$,
$i=1,2,\dotsc,n$. Figure~\ref{fig:pca} shows two graphs. Both show
training data for the same $5$ line outages for the IEEE 57-Bus data set, with
each class coded by a particular color and shape. In both graphs, we
show data vectors obtained after 10 PMU locations were selected with
the Greedy heuristic. In the left graph, we plot the coefficients of
the first and fifth principal components of each data vector. The
``strung out'' nature of the data for each class reflects the nature
of the training data. Recall that for each outage / class, we selected
20 points from a 12-hour period of rising demand, at 5 different
scalings of overall demand level. For the right graph in
Figure~\ref{fig:pca}, we plot the coefficients of the first and third
principal components of each data vector {\em after} transformation by
the hidden layer. For both graphs, we have chosen the two principal
components to plot to be those for which the separation between
classes is most evident. For the left graph (raw data), the data for
classes 3, 4, and 5 appear in distinct regions of space, although the
border between classes 4 and 5 is thin.  For the right graph (after
transformation), classes 3, 4, and 5 are somewhat more
distinct. Classes 1 and 2 are difficult to separate in both plots,
although in the right graph, they no longer overlap with the other
three classes.  The effects of tighter clustering and cleaner
separation after transformation, which we noted in
Table~\ref{tbl:transform2}, are evident in the graphs of
Figure~\ref{fig:pca}.


\begin{figure}
	\centering
	\begin{tabular}{cc}
	\subfloat[Original feature space after bus selection (the 1st and 5th principal axes)]{
	\includegraphics[width=1.5in,height=.9in]{./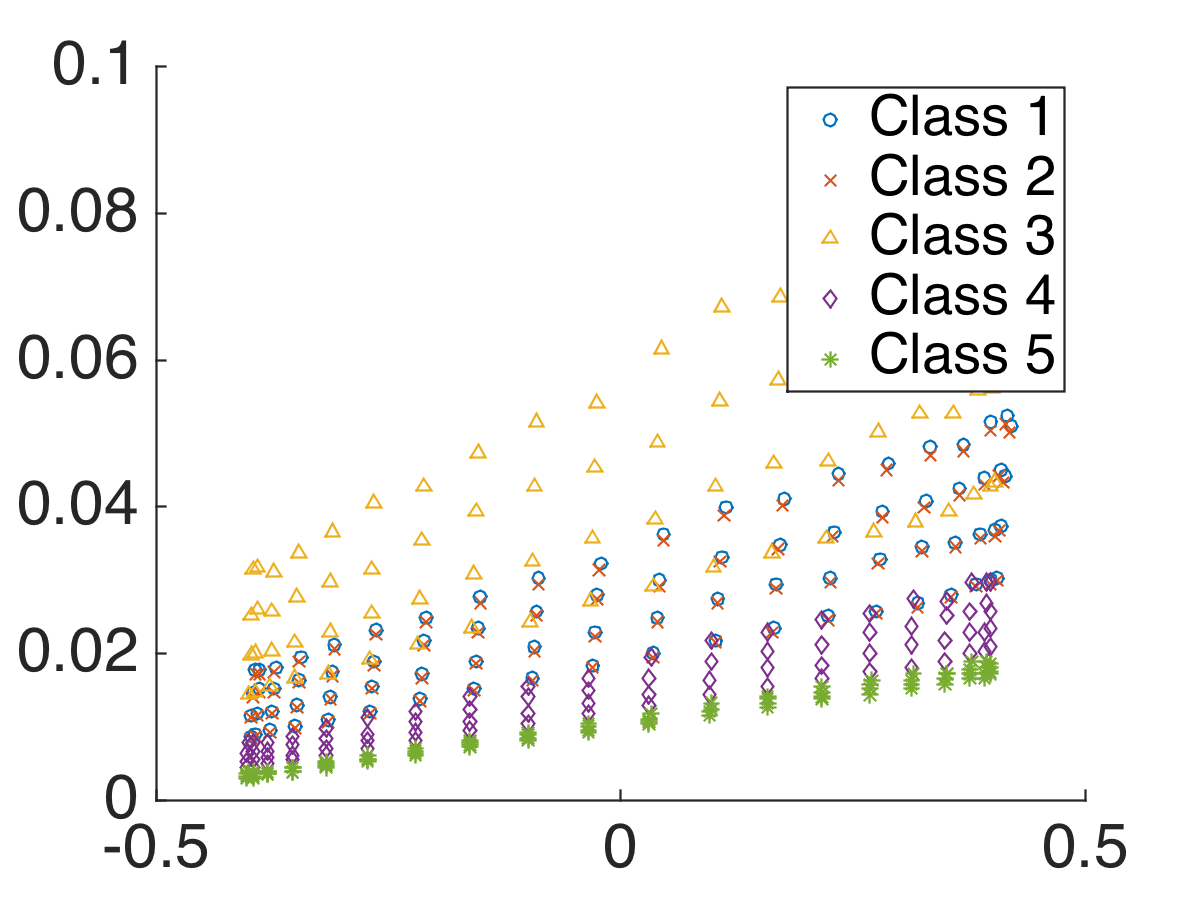}
}&
	\subfloat[The feature space after neural network transformation (the 1st and 3rd principal axes)]{
		\includegraphics[width=1.5in,height=.9in]{./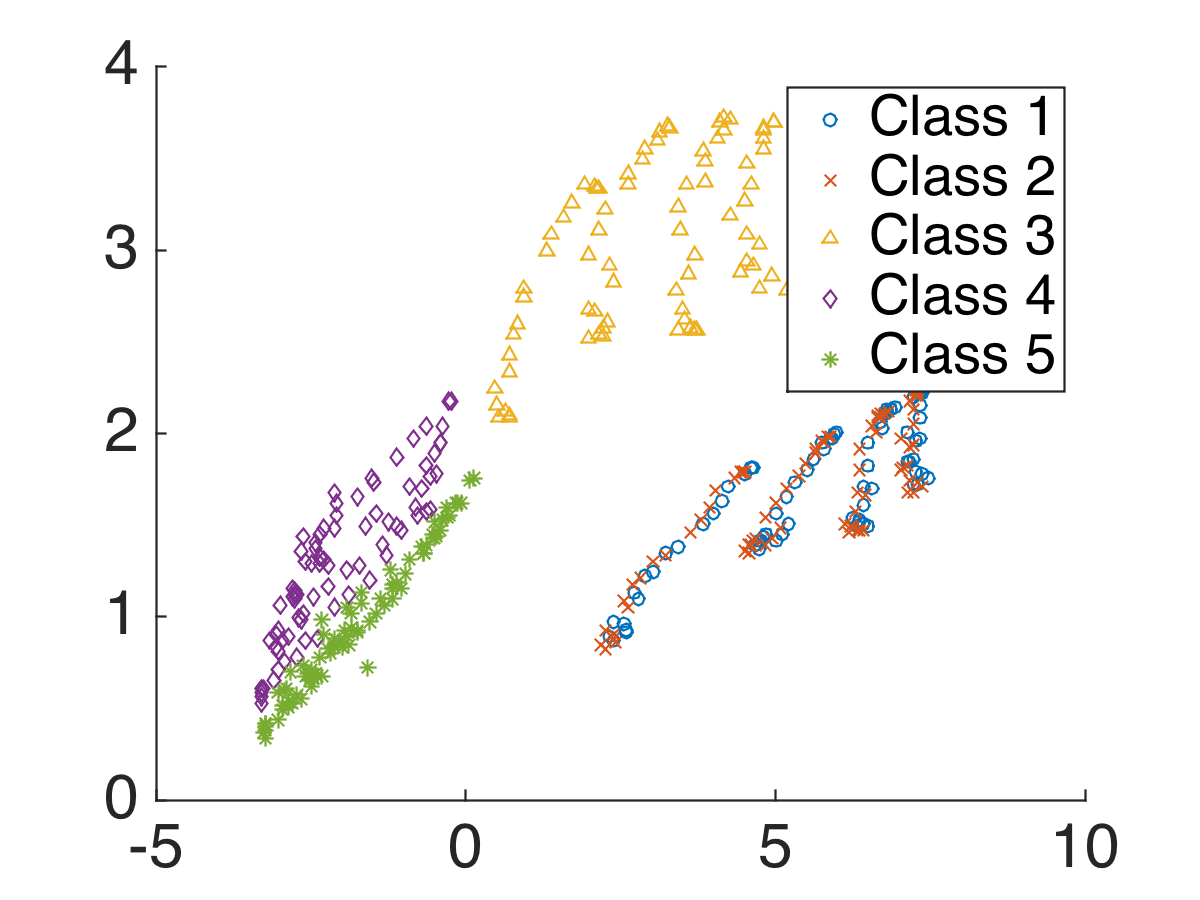}}
	\end{tabular}
	\caption{Data representation after dimension reduction to
	2D. Different colors/styles represent data points of different labels.}
	\label{fig:pca}
\end{figure}



\begin{table}
	\caption{Statistics of the synthetic data for double lines
		outage.}
	\label{tbl:doubledata}
	\centering
	\begin{tabular}{@{}l|r|r|r|r|r}
		System & \#classes & \#Train & \#Val &
		\#Test & \#Features\\
		\hline
		14-Bus & 182 & 16,420 & 8,210 & 41,050 & 30\\
		\hline
		30-Bus & 715 & 66,160 & 33,080 & 165,400 & 62\\
	\end{tabular}
\end{table}
\begin{table}
	\begin{center}
		\caption{Error rates of placing PMUs on all buses for double lines outage.}
		\label{tbl:doublefull}
		\begin{tabular}{l|r|r}
			& 14-bus & 30-bus\\
			\hline
			Linear MLR &  26.07\% & 36.32\%\\
			Neural network with one hidden layer& 0\% & 0.65\%
		\end{tabular}
	\end{center}
\end{table}
\begin{table*}
	\centering
	\caption{Comparison of different approaches for sparse PMU
		placement for double outages detection.}
	\label{tbl:doublesparse}
	\begin{tabular}{l|l|l|r|r|r|r}
		Case & Number of PMU & Model & $\tau$ & Buses selected & Err.
		(top1) & \modify{Err. (top2)}\\
		\hline
		\multirow{2}{*}{14-bus} & \multirow{2}{*}{3} &
		Linear MLR (greedy) & $8$ & [3 5 14] & 83.0\% &
		\modify{71.7\%}\\
		& & Neural Network (greedy) & $8$ & [3 12 13] & 4.3\% &
		\modify{0.9\%}\\
		\hline
		\multirow{2}{*}{30-bus} & \multirow{2}{*}{5} &
		Linear MLR (greedy) &$0.5$ & [4 5 17 23 30] & 90.6\% &
		\modify{84.5\% }\\
		& & Neural Network (greedy) & $8$ & [5 14 19 29 30] & 12.7\% &
		\modify{5.6\%}\\
	\end{tabular}
\end{table*}
\subsection{Double-Line Outage Detection}

We now extend our identification methodology to detect not just
single-line outages, but also outages on two lines simultaneously.
The number of classes that our classifier needs to distinguish between
now scales with the {\em square} of the number of lines in the grid,
rather than being approximately equal to the number of lines. For this
much larger number of classes, we generate data in the manner
described in Section~\ref{subsec:data}, again omitting cases where the
outage results in an infeasible network. Table~\ref{tbl:doubledata}
shows the number of classes for the 14- and 30-bus networks, along
with the number of training / validation / test points. Note in particular that there are 182 distinct outage events
for the 14-bus system, and 715 distinct events for the 30-bus system.

Table~\ref{tbl:doublefull} shows results of our classification
approaches for the case in which PMU observations are made at all
buses. The neural network model has a single hidden layer of 100
nodes. The neural network has dramatically better performance than the
linear MLR classifier on these problems, attaining a zero error rate
on the 14-bus tests.

We repeat the experiment using a subset of buses chosen with the
greedy heuristic described in Section~\ref{sec:pmu_selection} --- 3
buses for the 14-bus network and 5 buses for the 30-bus network. Given
the low dimensionality of the feature space and the large number of
classes, these are difficult problems.  \modify{(Because it was shown
  in the previous experiments that the group-LASSO approach has
  inferior performance to the greedy heuristic, we omit it from this
  experiment.)}  \smodify{As we see in Table~\ref{tbl:doublesparse},
  the linear MLR classifiers do not give good results, with ``top1''
  and ``top2'' error rates all in excess of 71\%.  Much better results
  are obtained for neural network with bus selection performed by the
  greedy heuristic, which obtains ``top2'' error rates of less than
  1\% in the 14-bus case and 5.6\% in the 30-bus case.}

\section{Conclusions} \label{sec:conclusion}

This work describes the use of neural networks to detect single- and
double-line outages from PMU data on a power grid. We show significant
improvements in classification performance over the linear multiclass
logistic regression methods described in \cite{KimW16a}, particularly
when data about the PMU signatures of different outage events is
gathered over a wide range of demand conditions. By adding
regularization to the model, we can determine the locations to place a
limited number of PMUs in a way that optimizes classification
performance.  Our approach uses a high-fidelity AC model of the grid
to generate data examples that are used to train the neural-network
classifier. Although (as is true in most applications of neural
networks) the training process is computationally heavy, the
predictions can be obtained with minimal computation, allowing the
model to be deployed in real time.


\bibliographystyle{IEEEtran}
\bibliography{nn_power}

\begin{thebibliography}{10}
\providecommand{\url}[1]{#1}
\csname url@samestyle\endcsname
\providecommand{\newblock}{\relax}
\providecommand{\bibinfo}[2]{#2}
\providecommand{\BIBentrySTDinterwordspacing}{\spaceskip=0pt\relax}
\providecommand{\BIBentryALTinterwordstretchfactor}{4}
\providecommand{\BIBentryALTinterwordspacing}{\spaceskip=\fontdimen2\font plus
\BIBentryALTinterwordstretchfactor\fontdimen3\font minus
  \fontdimen4\font\relax}
\providecommand{\BIBforeignlanguage}[2]{{%
\expandafter\ifx\csname l@#1\endcsname\relax
\typeout{** WARNING: IEEEtran.bst: No hyphenation pattern has been}%
\typeout{** loaded for the language `#1'. Using the pattern for}%
\typeout{** the default language instead.}%
\else
\language=\csname l@#1\endcsname
\fi
#2}}
\providecommand{\BIBdecl}{\relax}
\BIBdecl

\bibitem{ZhuG12}
H.~Zhu and G.~B. Giannakis, ``Sparse overcomplete representations for efficient
  identification of power line outages,'' \emph{IEEE Transactions on Power
  Systems}, vol.~27, no.~4, pp. 2215--2224, Nov. 2012.

\bibitem{CheLW14}
J.-C. Chen, W.-T. Li, C.-K. Wen, J.-H. Teng, and P.~Ting, ``Efficient
  identification method for power line outages in the smart power grid,''
  \emph{IEEE Transactions on Power Systems}, vol.~29, no.~4, pp. 1788--1800,
  Jul. 2014.

\bibitem{TatO08}
J.~E. Tate and T.~J. Overbye, ``Line outage detection using phasor angle
  measurements,'' \emph{IEEE Transactions on Power Systems}, vol.~23, no.~4,
  pp. 1644--1652, Nov. 2008.

\bibitem{TatO09}
------, ``Double line outage detection using phasor angle measurements,'' in
  \emph{2009 IEEE Power \& Energy Society General Meeting}, Calgary, AB, Jul.
  2009, pp. 1--5.

\bibitem{AbdME12}
A.~Y. Abdelaziz, S.~F. Mekhamer, M.~Ezzat, and E.~F. El-Saadany, ``{Line outage
  detection using Support Vector Machine {(SVM)} based on the Phasor
  Measurement Units {(PMUs)} technology},'' in \emph{2012 IEEE Power and Energy
  Society General Meeting}, San Diego, CA, Jul. 2012, pp. 1--8.

\bibitem{KimW16a}
T.~Kim and S.~J. Wright, ``{PMU} placement for line outage identification via
  multinomial logistic regression,'' \emph{IEEE Transactions on Smart Grid},
  vol.~PP, no.~99, 2016.

\bibitem{ZhaCP16b}
Y.~Zhao, J.~Chen, and H.~V. Poor, ``Efficient neural network architecture for
  topology identification in smart grid,'' in \emph{Signal and Information
  Processing (GlobalSIP), 2016 IEEE Global Conference on}.\hskip 1em plus 0.5em
  minus 0.4em\relax IEEE, 2016, pp. 811--815.

\bibitem{GarC16a}
M.~Garcia, T.~Catanach, S.~Vander~Wiel, R.~Bent, and E.~Lawrence, ``Line outage
  localization using phasor measurement data in transient state,'' \emph{IEEE
  Transactions on Power Systems}, vol.~31, no.~4, pp. 3019--3027, 2016.

\bibitem{ZhaCP17a}
Y.~Zhao, J.~Chen, and H.~V. Poor, ``A learning-to-infer method for real-time
  power grid topology identification,'' Tech. Rep., 2017, arXiv:1710.07818.

\bibitem{malioutov2005sparse}
D.~Malioutov, M.~Cetin, and A.~S. Willsky, ``A sparse signal reconstruction
  perspective for source localization with sensor arrays,'' \emph{IEEE
  transactions on signal processing}, vol.~53, no.~8, pp. 3010--3022, 2005.

\bibitem{MeiVB08a}
L.~Meier, S.~Van De~Geer, and P.~B{\"u}hlmann, ``The group {LASSO} for logistic
  regression,'' \emph{Journal of the Royal Statistical Society: Series B
  (Statistical Methodology)}, vol.~70, no.~1, pp. 53--71, 2008.

\bibitem{WriNF09a}
S.~J. Wright, R.~D. Nowak, and M.~A.~T. Figueiredo, ``Sparse reconstruction by
  separable approximation,'' \emph{IEEE Transactions on Signal Processing},
  vol.~57, no.~7, pp. 2479--2493, 2009.

\bibitem{LiuN89a}
D.~C. Liu and J.~Nocedal, ``On the limited memory {BFGS} method for large scale
  optimization,'' \emph{Mathematical Programming}, vol.~45, no.~1, pp.
  503--528, 1989.

\bibitem{NgiCL11a}
J.~Ngiam, A.~Coates, A.~Lahiri, B.~Prochnow, Q.~V. Le, and A.~Y. Ng, ``On
  optimization methods for deep learning,'' in \emph{Proceedings of the 28th
  International Conference on Machine Learning}, 2011, pp. 265--272.

\bibitem{LiF01a}
D.-H. Li and M.~Fukushima, ``On the global convergence of the {BFGS} method for
  nonconvex unconstrained optimization problems,'' \emph{{SIAM} Journal on
  Optimization}, vol.~11, no.~4, pp. 1054--1064, 2001.

\bibitem{IEEE14a}
``Power systems test case archive,'' 2014, [Online]. Available:
  \url{http://www.ee.washington. edu/research/pstca/}.

\bibitem{ZimM11a}
R.~D. Zimmerman, C.~E. Murillo-S{\'a}nchez, and R.~J. Thomas, ``{MATPOWER}:
  Steady-state operations, planning, and analysis tools for power systems
  research and education,'' \emph{{IEEE} Transactions on power systems},
  vol.~26, no.~1, pp. 12--19, 2011.

\bibitem{PerK11a}
M.~Perninge, V.~Knazkins, M.~Amelin, and L.~S{\"o}der, ``Modeling the electric
  power consumption in a multi-area system,'' \emph{European Transactions on
  Electrical Power}, vol.~21, no.~1, pp. 413--423, 2011.

\bibitem{CarLG01a}
R.~Caruana, S.~Lawrence, and C.~L. Giles, ``Overfitting in neural nets:
  Backpropagation, conjugate gradient, and early stopping,'' in \emph{Advances
  in Neural Information Processing Systems}, 2001, pp. 402--408.

\bibitem{NocW06}
J.~Nocedal and S.~J. Wright, \emph{{Numerical Optimization}}, 2nd~ed.\hskip 1em
  plus 0.5em minus 0.4em\relax New York: Springer, 2006.

\bibitem{Pea69a}
J.~D. Pearson, ``Variable metric methods of minimisation,'' \emph{The Computer
  Journal}, vol.~12, no.~2, pp. 171--178, 1969.

\end{thebibliography}
\arxiv{
\clearpage

\pagebreak

\appendix

\subsection{Introduction and Implementation of the SpaRSA Algorithm}

We solve the nonsmooth regularized problem \eqref{eq:reg-loss} by
SpaRSA \cite{WriNF09a}, a proximal gradient algorithm. When applied to
\eqref{eq:reg-loss}, iteration $t$ of SpaRSA solves the following
problem, for some scalar $\alpha_t > 0$:
\begin{align} 
\nonumber
W^{t+1} & := \arg\min_{W} \frac12 \left\|W - \left(W^t - \frac{1}{\alpha_t} \nabla
f(W^t) \right) \right\|_F^2 \\
\label{eq:sparsa}
& \quad\quad\quad + \frac{\tau}{\alpha_t} c(W^t_1, I),
\end{align}
where $W^t := [W^t_1,\dotsc, W^t_{N+1}]$ denotes the $t$th iterate of
$W$. By utilizing the structure of $c(\cdot,I)$ in \eqref{eq:def.cr},
we can solve \eqref{eq:sparsa} inexpensively, in closed form.  For the value
of $\alpha_t$, at any given iteration $t$, we follow the suggestion in
\cite{WriNF09a} to start at a certain guess, and gradually increase it
until the solution of \eqref{eq:sparsa} satisfies
\begin{equation}
	f(W^{t+1}) < f(W^t) - \frac{\sigma}{2}\alpha_t \|W^{t+1} -
	W^t\|_F^2,
	\label{eq:decrease}
\end{equation}
for some small positive value of $\sigma$ (typically
$\sigma=10^{-3}$).

\subsection{Key Lemmas for Convergence Analysis}

We now analyze the convergence guarantee for SpaRSA applied to
\eqref{eq:reg-loss}. First, we establish bounds on the gradient and
Hessian of $f$.
We do not restrict using \eqref{eq:tanh} as the choice of $\sigma$. Instead, we only require that $\sigma$ is twice-continuously differentiable. 

\begin{lemma}	\label{lem:bddgrad}
Given any initial point $W^0$, there exists $c_1 \geq 0$ such that
\begin{equation} \label{eq:bddgrad}
\|\nabla f(W)\| \leq c_1
\end{equation}
in the level set $\{W\mid f(W) \leq f(W^0)\}$.
\end{lemma}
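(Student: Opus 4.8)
The plan is to exploit the regularization term in \eqref{eq:f} to show that the sublevel set $S := \{W \mid f(W) \le f(W^0)\}$ is compact, and then to invoke continuity of the gradient map. The key structural facts are that each loss term $\ell(\bx_i^{N+1},y_i)$ is nonnegative, while the regularizer $\tfrac{\epsilon}{2}\sum_{j=1}^{N+1}\|W_j\|_F^2$ is strictly positive for $\epsilon > 0$. The whole argument is a textbook ``continuous function on a compact set is bounded'' once compactness of $S$ is in hand.

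First I would establish that $S$ is bounded. For any $W \in S$, nonnegativity of each term $\ell(\bx_i^{N+1},y_i)$ gives, for every $j \in [N+1]$,
\[
\frac{\epsilon}{2}\|W_j\|_F^2 \;\le\; \frac{\epsilon}{2}\sum_{k=1}^{N+1}\|W_k\|_F^2 \;\le\; f(W) \;\le\; f(W^0),
\]
so each block obeys $\|W_j\|_F^2 \le 2 f(W^0)/\epsilon$, and $S$ is contained in a ball of finite radius in the parameter space. This is precisely the step where $\epsilon > 0$ is indispensable: with $\epsilon = 0$ the weights could escape to infinity while the loss stays bounded, so the sublevel set need not be bounded and the conclusion could fail. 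This matches the footnote's remark that a positive $\epsilon$ is required for the convergence theory.

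Next I would argue that $S$ is closed, hence compact. Because $\sigma$ is twice continuously differentiable, the intermediate maps $\bx_i^j$ defined by the constraints in \eqref{eq:f} are continuous (indeed smooth) functions of $W$; the log-sum-exp loss $\ell$ in \eqref{eq:def.l} is smooth in its first argument; and the Frobenius-norm regularizer is continuous. Therefore $f$ is continuous, so $S$, being the preimage of $(-\infty, f(W^0)]$ under a continuous function, is closed. A closed and bounded subset of the finite-dimensional parameter space is compact.

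Finally I would conclude via continuity of $\nabla f$. The gradient formulas \eqref{eq:grads} express $\nabla f$ through sums, products, and compositions of the continuous quantities $\bx_i^j$, $\sigma'(\cdot)$, the blocks $W_j$, and $\nabla_{\bx_i^{N+1}}\ell$, all of which are continuous in $W$ under the assumption $\sigma \in C^1$ (guaranteed by $\sigma \in C^2$). Hence $W \mapsto \|\nabla f(W)\|$ is continuous, and a continuous function attains its maximum on the compact set $S$, so $c_1 := \max_{W \in S}\|\nabla f(W)\|$ is finite, which is exactly \eqref{eq:bddgrad}. I do not anticipate a real obstacle here; the only point demanding care is confirming that $S$ is genuinely bounded, which rests entirely on the combination $\epsilon > 0$ and $\ell \ge 0$. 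Note in particular that no boundedness of $\sigma$ itself (such as the $[-1,1]$ range of $\tanh$) is needed, since the relevant functions are evaluated only over the compact set $S$.
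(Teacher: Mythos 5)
Your proposal is correct and follows essentially the same route as the paper: nonnegativity of $\ell$ gives $f(W) \ge \tfrac{\epsilon}{2}\|W\|_F^2$, so the sublevel set lies in a compact ball, and continuity of $\nabla f$ then yields the bound. The only cosmetic difference is that the paper simply bounds $\|\nabla f\|$ on the enclosing ball rather than first proving the level set itself is closed and compact, which makes your extra closedness step unnecessary but harmless.
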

\begin{proof}
Because the loss function $\ell$ defined by \eqref{eq:def.l} is
nonnegative, we see from \eqref{eq:f} that
\[
		f(W) \geq \frac{1}{2} \epsilon \|W\|_F^2,
\]
and therefore $\{W\mid f(W) \leq f(W^0)\}$ is a subset of
\begin{equation}
		B\left(0,\sqrt{\frac{2}{\epsilon} f(W^0)}\right) := \left\{W\mid \|W\|_F \leq
			\sqrt{\frac{2}{\epsilon} f(W^0)}\right\},
			\label{eq:B}
\end{equation}
which is a compact set.  By the assumption on $\sigma$ and $\ell$,
$\|\nabla f(W)\|$ is a continuous function with respect to $W$.
Therefore, we can find $c_1 \geq 0$ such that \eqref{eq:bddgrad} holds
within the set \eqref{eq:B}.  Since the level set is a subset of
\eqref{eq:B}, \eqref{eq:bddgrad} holds with the same value of $c_1$ within the level set.
\end{proof}

\begin{lemma}	\label{lem:Lipschitz}
Given any initial point $W^0$, and any $c_2 \geq 0$, there exists
$L_{c_2} > 0$ such that $\|\nabla^2 f(W)\| \leq L_{c_2}$ in the set
$\{W + \bp \mid f(W) \leq f(W^0), \; \|\bp\| \leq c_2\}$.
\end{lemma}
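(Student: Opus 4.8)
The plan is to reuse essentially verbatim the compactness argument from the proof of Lemma~\ref{lem:bddgrad}, the only new wrinkle being that the set in question is a Minkowski enlargement of the level set rather than the level set itself. First I would recall from that proof that, because $\ell \geq 0$ in \eqref{eq:def.l} forces $f(W) \geq \tfrac{1}{2}\epsilon \|W\|_F^2$, the level set $\{W \mid f(W) \leq f(W^0)\}$ is contained in the closed ball $B(0,R)$ with $R := \sqrt{(2/\epsilon) f(W^0)}$. This is the same $R$ appearing in \eqref{eq:B}, and the containment is inherited directly.

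Next I would control the enlarged set. For any point written as $W + \bp$ with $f(W) \leq f(W^0)$ and $\|\bp\| \leq c_2$, the triangle inequality gives $\|W + \bp\|_F \leq \|W\|_F + \|\bp\|_F \leq R + c_2$. Hence $\{W + \bp \mid f(W) \leq f(W^0),\; \|\bp\| \leq c_2\} \subseteq B(0, R + c_2)$, and the latter is a compact set. This padding of the radius by $c_2$ is the single point that distinguishes this lemma from Lemma~\ref{lem:bddgrad}: one cannot simply quote the earlier containment, since the admissible $W+\bp$ may leave the level set.

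Finally I would invoke the smoothness assumption. Since $\sigma$ is assumed twice continuously differentiable and $\ell$ is smooth, the composition defining $f$ through the constraints in \eqref{eq:f} is twice continuously differentiable, so $\|\nabla^2 f(W)\|$ is a continuous function of $W$. A continuous function attains its maximum on the compact set $B(0, R + c_2)$; taking $L_{c_2}$ to be this maximum (inflating it slightly if necessary to guarantee $L_{c_2} > 0$, which in any case follows from the $\epsilon > 0$ regularization term contributing $\epsilon I$ to the Hessian) yields the claimed uniform bound on the whole enlarged set. I do not anticipate a genuine obstacle here: verifying that $f$ is $C^2$ is routine bookkeeping with the chain rule, and the only substantive step is the triangle-inequality enlargement that produces a compact superset.
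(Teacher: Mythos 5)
Your proposal is correct and follows essentially the same route as the paper's own proof: contain the enlarged set in the ball $B\bigl(0,\sqrt{(2/\epsilon)f(W^0)}+c_2\bigr)$ via the triangle inequality, then use continuity of $\|\nabla^2 f(W)\|$ on this compact set to extract the maximum $L_{c_2}$. The paper states the containment without spelling out the triangle-inequality step, so your version is just a slightly more explicit rendering of the same argument.
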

\begin{proof}
Clearly, from the argument in the proof for Lemma \ref{lem:bddgrad},
$\{W + \bp \mid f(W) \leq f(W^0), \; \|\bp\|\leq c_2\}$ is a subset of
the compact set
\[
B\left(0,\sqrt{\frac{2}{\epsilon} f(W^0)} + c_2\right).
\]
Therefore, as a continuous function with respect to $W$, $\|\nabla^2
f(W)\|$ achieves its maximum $L_{c_2}$ in this set.
\end{proof}

Now we provide a convergence guarantee for the SpaRSA algorithm.
\begin{theorem}
	\label{thm:sparsa}
	All accumulation points generated by SpaRSA
	are stationary points.
\end{theorem}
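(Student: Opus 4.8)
The plan is to follow the standard convergence template for proximal-gradient (forward--backward) methods, using Lemmas~\ref{lem:bddgrad} and~\ref{lem:Lipschitz} to supply the regularity needed at each step. The argument splits into three parts: (i) show that the inner line search that increases $\alpha_t$ terminates and keeps $\alpha_t$ confined to a bounded interval; (ii) deduce from the sufficient-decrease test \eqref{eq:decrease} (read as a decrease condition on the full objective $L_I$, as in SpaRSA) that the successive steps $\|W^{t+1}-W^t\|_F$ vanish; and (iii) pass to the limit in the first-order optimality conditions of the subproblem \eqref{eq:sparsa} along a convergent subsequence to recover stationarity of $L_I$.

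For part (i), I would first record that $L_I\geq 0$ (since $\ell\geq 0$ and $c\geq 0$), so $L_I$ is bounded below, and that, exactly as in the proof of Lemma~\ref{lem:bddgrad}, the level set $\{W : L_I(W)\leq L_I(W^0)\}\subseteq\{W : f(W)\leq L_I(W^0)\}$ is contained in a ball of the form \eqref{eq:B} and hence compact. Once $L_I$ is shown to be nonincreasing, all iterates remain in this compact set; using $\|\nabla f\|\leq c_1$ from Lemma~\ref{lem:bddgrad} together with $\alpha_t\geq\alpha_{\min}$ to bound the prox-step length, the trial points produced during backtracking remain in a slightly enlarged compact set, on which Lemma~\ref{lem:Lipschitz} furnishes a single Lipschitz constant $L$ for $\nabla f$. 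Combining the descent lemma for $f$ with the optimality of \eqref{eq:sparsa}, namely $\langle\nabla f(W^t),W^{t+1}-W^t\rangle+\tau c(W_1^{t+1},I)+\tfrac{\alpha_t}{2}\|W^{t+1}-W^t\|_F^2\leq\tau c(W_1^t,I)$, yields $L_I(W^{t+1})\leq L_I(W^t)-\tfrac{\alpha_t-L}{2}\|W^{t+1}-W^t\|_F^2$; hence \eqref{eq:decrease} holds as soon as $\alpha_t\geq L/(1-\sigma)$, so the inner loop stops and $\alpha_t\leq\bar\alpha$ for some uniform $\bar\alpha$, while $\alpha_t\geq\alpha_{\min}>0$ by construction.

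For parts (ii) and (iii), summing \eqref{eq:decrease} telescopes the decrease in $L_I$, and boundedness below gives $\sum_t\alpha_t\|W^{t+1}-W^t\|_F^2<\infty$; with $\alpha_t\geq\alpha_{\min}$ this forces $\|W^{t+1}-W^t\|_F\to 0$. I would then write the optimality condition of the convex subproblem \eqref{eq:sparsa}: there is $\bxi^{t+1}\in\partial c(W_1^{t+1},I)$ (embedded as zero on the blocks $W_2,\dots,W_{N+1}$) with $\alpha_t(W^{t+1}-W^t)+\nabla f(W^t)+\tau\bxi^{t+1}=\bzero$. Along a subsequence $W^{t_k}\to W^*$ one also has $W^{t_k+1}\to W^*$ (vanishing steps), $\alpha_{t_k}(W^{t_k+1}-W^{t_k})\to\bzero$ (since $\alpha_{t_k}\leq\bar\alpha$), and $\nabla f(W^{t_k})\to\nabla f(W^*)$ by continuity; therefore $\tau\bxi^{t_k+1}\to-\nabla f(W^*)$. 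Outer semicontinuity (closed graph) of the subdifferential of the convex, everywhere-continuous function $c$ then places the limit in $\tau\partial c(W_1^*,I)$, giving $\bzero\in\nabla f(W^*)+\tau\partial c(W_1^*,I)$, i.e. $W^*$ is stationary for $L_I$.

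The main obstacle I anticipate is part (i): securing a \emph{single} Lipschitz constant valid uniformly over all iterates \emph{and} all backtracking trial points. This forces one to establish monotonicity of $L_I$ first (to confine the iterates to the compact level set) and to bound the trial-step lengths so that they remain inside the enlarged compact set of Lemma~\ref{lem:Lipschitz}; the bound $c_1$ of Lemma~\ref{lem:bddgrad} together with $\alpha_t\geq\alpha_{\min}$ is precisely what supplies an admissible buffer $c_2$. By contrast, the vanishing-step summation and the closed-graph limiting argument in parts (ii)--(iii) are routine once the uniform bounds $\alpha_{\min}\leq\alpha_t\leq\bar\alpha$ are in hand.
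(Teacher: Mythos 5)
Your proof is correct, but it takes a genuinely different route from the paper's. The paper proves Theorem~\ref{thm:sparsa} by citation: it checks that the hypotheses of Theorem~1 of \cite{WriNF09a} hold --- the acceptance test \eqref{eq:decrease} implies the nonmonotone condition \eqref{eq:descent} with $M=0$, the regularizer $c$ is convex and finite-valued, $L_I$ is bounded below, and $\nabla f$ is Lipschitz on the level set (which really rests on the Hessian bound of Lemma~\ref{lem:Lipschitz}, though the paper's text points to Lemma~\ref{lem:bddgrad} at that step) --- and then invokes the cited theorem for the telescoping and limiting arguments. You instead give a self-contained first-principles proof: termination of the $\alpha_t$-increasing loop with $\alpha_{\min}\le\alpha_t\le\bar\alpha$ via the descent lemma plus the subproblem's optimality inequality, summability of $\alpha_t\|W^{t+1}-W^t\|_F^2$ forcing vanishing steps, and outer semicontinuity of $\partial c$ to pass to the limit in the stationarity condition $\alpha_t(W^{t+1}-W^t)+\nabla f(W^t)+\tau\bxi^{t+1}=\bzero$. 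What your version buys is that it makes explicit exactly the points the paper leaves implicit in the citation: in particular, the need to establish monotonicity of $L_I$ first so the iterates stay in the compact level set, and the use of the gradient bound $c_1$ of Lemma~\ref{lem:bddgrad} together with $\alpha_t\ge\alpha_{\min}$ to bound the trial-step lengths so that a single Lipschitz constant from Lemma~\ref{lem:Lipschitz} (with buffer $c_2$) covers all segments traversed --- this is precisely the role those two lemmas play but which the paper's two-line verification does not spell out. What the paper's approach buys is brevity and a slightly stronger setting (the cited theorem allows the nonmonotone acceptance rule with $M>0$, which your monotone telescoping argument would need modification to handle). Both arguments are sound for the monotone rule actually used.
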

\begin{proof}
We will show that the conditions of \cite[Theorem 1]{WriNF09a} are
satisfied, and thus the result follows.  This theorem states that if
the acceptance condition is
\begin{equation}\label{eq:descent}
		f(W^{t+1}) \leq \max_{i=\max (t-M,0),\dotsc, t} f(W^i)
                - \frac{\sigma\alpha_t}{2}\|W^{t+1} - W^t\|_F^2
\end{equation}
for some
nonnegative integer $M$ and some $\sigma \in (0,1)$, $f$ is Lipschitz
continuously differentiable, the regularizer $c$ defined in
\eqref{eq:def.c} is convex and finite-valued, and $L_I(W)$ of
\eqref{eq:reg-loss} is lower-bounded, then all accumulation points are
stationary.  Clearly, \eqref{eq:decrease} implies the acceptance
condition \eqref{eq:descent}, with $M=0$, and the conditions on
$c(W_1,I)$ and $L_I(W)$ are easily verified.  It remains only to check
Lipschitz continuity of $\nabla f$.  Because the condition
\eqref{eq:decrease} ensures that it is a descent method, all iterates
lie in the set $\{W \mid f(W) \leq f(W^0)\}$.  Thus, by Lemma
\ref{lem:bddgrad}, $f$ has Lipschitz
continuous gradient within this range.  Hence all conditions of
Theorem 1 in \cite{WriNF09a} are satisfied, and the result follows.
\end{proof}

\subsection{Overview of L-BFGS}
\label{subsec:lbfgs}
Before describing our modified L-BFGS algorithm for solving the smooth
problem \eqref{eq:f} obtained after bus selection, we introduce the
original L-BFGS method, following the description from
\cite[Section~7.2]{NocW06}.  Consider the problem
\[
	\min_{W\in \R^d}\quad f(W),
\]
 where $f$ is twice-continuously differentiable.  At iterate $W^t$, L-BFGS
 constructs a symmetric positive definite matrix $B_t$ to approximate
 $\nabla^2 f(W^t)^{-1}$, and the search direction $\bd_t$ is
 obtained as
\begin{equation}
\bd_t = -B_t \nabla f(W^t).
\label{eq:step}
\end{equation}
Given an initial estimate $B_t^0$ at iteration $t$ and a specified
integer $m\geq 0$, we define $m(t) = \min(m, t)$ and construct the
matrix $B_t$ as follows for $t=1,2,\dotsc$:
\begin{align}
	B_t &:= V^T_{t-1}\cdots V^T_{t-m(t)} B_t^0 V_{t-m(t)}\cdots
V_{t-1}
+ \rho_{t-1} s_{t-1} s_{t-1}^T +\nonumber\\
&\qquad \sum_{j=t-m(t)}^{t-2} \rho_{j} V_{t-1}^T\cdots
V_{j+1}^T s_{j} s_{j}^T V_{j+1} \cdots V_{t-1},
\label{eq:Bt}
\end{align}
where for $j\geq 0$, we define
\begin{gather}
	V_j := I - \rho_j \by_j\bs_j^T,\quad
\rho_j := \frac{1}{ \by_j^T\bs_j}, \quad\nonumber\\
\bs_j := W^{j+1}- W^j,\quad
\by_j := \nabla f(W^{j+1}) - \nabla f(W^j).
\label{eq:v}
\end{gather}
The initial matrix
$B_t^0$, for $t \ge 1$, is commonly chosen to be
\begin{equation*}
B_t^0 = \frac{\by_{t-1}^T \bs_{t-1}}{\by_{t-1}^T \by_{t-1}} I.
\end{equation*}
At the first iteration $t = 0$, one usually takes $B_0 = I$, so that
the first search direction $\bd_0$ is the steepest descent direction
$-\nabla f(W^0)$.  After obtaining the update direction $\bd_t$,
L-BFGS conducts a line search procedure to obtain a step size $\eta_t$
satisfying certain conditions, among them the ``sufficient decrease''
or ``Armijo'' condition
\begin{equation} \label{eq:armijo}
f(W^t + \eta_t \bd_t) \leq f(W^t) + \eta_t \gamma \nabla f(W^t)^T
\bd_t,
\end{equation}
where $\gamma \in (0,1)$ is a specified parameter. We assume that the
steplength $\eta_t$ satisfying \eqref{eq:armijo} is chosen via a
backtracking procedure. That is, we choose a parameter $\beta \in
(0,1)$, and set $\eta_t$ to the largest value of $\beta^i$,
$i=0,1,\dotsc$, such that \eqref{eq:armijo} holds.


Note that we use vector notation for such quantities as $\bd_t$,
$\by_j$, $\bs_j$, although these quantities are actually matrices in
our case. Thus, to compute inner products such as $\by_{t}^T \bs_{t}$,
we first need to reshape these matrices as vectors.

\subsection{A Modified L-BFGS Algorithm}
\label{subsec:new-lbfgs}

The key to modifying L-BFGS in a way that guarantees convergence to a
stationary point at a provable rate lies in designing the
modifications so that inequalities of the following form hold, for
some positive scalar values of $a$, $b$, and $\bar{b}$, and for all
vector $\bs_t$ and $\by_t$ defined by \eqref{eq:v} that are used in
the update of the inverse Hessian approximation $B_t$:
\begin{equation} \label{eq:upperlower}
a \|\bs_t\|^2	\leq \by_t^T \bs_t\leq b \|\bs_t\|^2,
\end{equation}
\begin{equation} \label{eq:secondupper}
	\frac{\by_t^T\by_t}{\by_t^T\bs_t}
\leq \bar{b}.
\end{equation}
The average value of the Hessian over the step from $W^t$ to $W^t
+\bs_t$ plays a role in the analysis; this is defined by
\begin{equation} \label{eq:def.Hbar}
	\bar{H}_t := \int^1_0 \nabla^2 f(W^t + t \bs_t) d t.
\end{equation}
When $f$ is strongly convex and twice continuously differentiable, no
modifications are needed: L-BFGS with backtracking line search can be
shown to converge to the unique minimal value of $f$ at a global Q-linear
rate. In this case, the properties \eqref{eq:upperlower} and
\eqref{eq:secondupper} hold when we set $a$ to be the global (strictly
positive) lower bound on the eigenvalues of $\nabla^2 f(W)$ and $b$
and $\bar{b}$ to be the global upper bound on these
eigenvalues. Analysis in \cite{LiuN89a} shows that the eigenvalues of
$B_t$ are bounded inside a strictly positive interval, for all $t$.

In the case of $f$ twice continuously differentiable, but possibly
nonconvex, we modify L-BFGS by skipping certain updates, so as to
ensure that the conditions \eqref{eq:upperlower} and
\eqref{eq:secondupper} are satisfied. Details are given in the
remainder of this section.

We note that conditions \eqref{eq:upperlower} and
\eqref{eq:secondupper} are essential for convergence of L-BFGS not
just theoretically but also empirically. Poor convergence behavior was
observed when we applied the original L-BFGS procedure directly to the
nonconvex 4-layer neural network problem in
Section~\ref{subsec:layers}.

Similar issues regarding poor performance on nonconvex problems are
observed when the full BFGS algorithm is used to solve nonconvex
problems.  (The difference between L-BFGS and BFGS is that for BFGS, in
\eqref{eq:Bt}, $m$ is always set to $t$
and $B_t^0$ is a fixed matrix independent of $t$.)  To ensure
convergence of BFGS for nonconvex problems, \cite{LiF01a} proposed to
update the inverse Hessian approximation only when we are certain
that its smallest eigenvalue after the update is lower-bounded by a
specified positive value.  In particular, those pairs $(\by_j,\bs_j)$
for which the following condition holds: $\tilde\epsilon \|\bs_j\|^2 >
\by_j^T \bs_j$ (for some fixed $\tilde\epsilon>0$) are not used in the
update formula \eqref{eq:Bt}.) Here, we adapt this idea to L-BFGS, by
replacing the indices $t-m(t),\dotsc,t-1$ used in the update formula
\eqref{eq:Bt} by a different set of indices $i^t_1, \dotsc,
i^t_{\hat{m}(t)}$ such that $0 \leq i^t_1 \le \dotsc \le
i^t_{\hat{m}(t)} \leq t-1$, which are the latest $\hat{m}(t)$
iteration indices (up to and including iteration $t-1$) for which the
condition
\begin{equation}
	\bs_j^T \by_j \geq \tilde\epsilon \bs_j^T \bs_j, 
\label{eq:cautious}
\end{equation}
is satisfied. (We define $\hat{m}(t)$ to be the minimum between
$m$ and the number of pairs that satisfy \eqref{eq:cautious}.)
Having determined these indices, we define $B_t$ by
\begin{align}
	B_t & := V^T_{i^t_{\hat{m}(t)}}\cdots V^T_{i^t_1} B_t^0 V_{i^t_1}\cdots
	V_{i^t_{\hat{m}(t)}} + \rho_{i_{\hat{m}(t)}^t}
	\bs_{i_{\hat{m}(t)}^t}\bs_{i_{\hat{m}(t)}^t}^T \nonumber\\
	& + \sum_{j=1}^{\hat{m}(t)-1} \rho_{i^t_j} V_{i^t_{\hat{m}(t)}}^T\cdots
V_{i^t_{j+1}}^T \bs_{i^t_j} \bs_{i^t_j}^T V_{i^t_{j+1}} \cdots
V_{i^t_{\hat{m}(t)}},
\label{eq:Bt_new}
\end{align}
and
\begin{equation}
	B_t^0 = \frac{\by_{i^t_{\hat{m}(t)}}^T
	\bs_{i^t_{\hat{m}(t)}}}{\by_{i^t_{\hat{m}(t)}}^T\by_{i^t_{\hat{m}(t)}}} I.
\label{eq:Bt0_new}
\end{equation}
(When $\hat{m}(t) = 0$, we take $B_t = I$.)  
We show below that, using this rule and the backtracking
line search, we have
\begin{equation}
	\min_{i=0,1,\dotsc,t} \|\nabla f(W^i)\| = O(t^{-1/2}).
	\label{eq:convrate}
\end{equation}
With this guarantee, together with compactness of the level set (see the proof of Lemma \ref{lem:bddgrad}) and that all the algorithm is a descent method so that all iterates stay in this level set, we can prove the
following result.
\begin{theorem} \label{th:conv}
Either we have $\nabla f(W^t)=0$ for some $t$, or else there exists an
accumulation point $\hat{W}$ of the sequence $\{ W^t \}$ that is
stationary, that is, $\nabla f(\hat{W})=0$.
\end{theorem}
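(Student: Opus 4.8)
The plan is to treat the rate \eqref{eq:convrate} as the workhorse and then extract the theorem by a routine compactness-and-continuity argument. First I would dispose of the trivial alternative: if $\nabla f(W^t) = 0$ for some finite $t$, the first conclusion holds and there is nothing to prove. So from here on I assume $\nabla f(W^t) \neq 0$ for every $t$ and aim to produce a stationary accumulation point.

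Under this assumption, \eqref{eq:convrate} gives $\min_{0 \le i \le t} \|\nabla f(W^i)\| = O(t^{-1/2}) \to 0$ as $t \to \infty$, so $\inf_i \|\nabla f(W^i)\| = 0$ and I can extract a subsequence $\{W^{t_k}\}$ with $\|\nabla f(W^{t_k})\| \to 0$. Because the Armijo condition \eqref{eq:armijo} makes the method monotone decreasing in $f$, every iterate stays in the level set $\{W \mid f(W) \le f(W^0)\}$, which is contained in the compact ball \eqref{eq:B} exhibited in the proof of Lemma~\ref{lem:bddgrad}. Compactness then lets me pass to a further subsequence $\{W^{t_{k_j}}\}$ converging to some $\hat W$, which is by definition an accumulation point of $\{W^t\}$. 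Since $\sigma$ is twice continuously differentiable, $f$ is twice continuously differentiable and in particular $\nabla f$ is continuous, so $\nabla f(\hat W) = \lim_j \nabla f(W^{t_{k_j}}) = 0$, which is the second conclusion.

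The genuinely substantive step is establishing \eqref{eq:convrate} itself, and I expect that to be the main obstacle rather than the wrap-up above. The key is that the cautious skipping rule \eqref{eq:cautious}, together with the bounds \eqref{eq:upperlower} and \eqref{eq:secondupper}, forces the eigenvalues of each inverse-Hessian approximation $B_t$ defined by \eqref{eq:Bt_new}--\eqref{eq:Bt0_new} to lie in a fixed positive interval $[\lambda_{\min}, \lambda_{\max}]$ independent of $t$; this is the L-BFGS eigenvalue-bounding argument of \cite{LiuN89a}, now applicable because only pairs satisfying \eqref{eq:cautious} enter the recursion. Given such uniform bounds, the direction $\bd_t = -B_t \nabla f(W^t)$ satisfies $\nabla f(W^t)^T \bd_t \le -\lambda_{\min}\|\nabla f(W^t)\|^2$ and $\|\bd_t\| \le \lambda_{\max}\|\nabla f(W^t)\|$, so $\bd_t$ is a gradient-related descent direction. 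Combining these angle and length conditions with the backtracking rule \eqref{eq:armijo} and the Hessian bound of Lemma~\ref{lem:Lipschitz} (which also bounds the accepted step below) would yield a per-iteration decrease $f(W^t) - f(W^{t+1}) \ge c\,\|\nabla f(W^t)\|^2$ for some $c > 0$; telescoping over $i = 0,\dots,t$ and using that $f$ is bounded below gives $t \cdot \min_{i \le t}\|\nabla f(W^i)\|^2 \le \sum_i \|\nabla f(W^i)\|^2 = O(1)$, which is \eqref{eq:convrate}. The delicate point inside this program is verifying the uniform eigenvalue bounds on $B_t$ under skipping, since the number and identity of the retained pairs $i^t_1,\dots,i^t_{\hat m(t)}$ vary with $t$.
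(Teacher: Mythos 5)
Your argument is correct and follows essentially the same route as the paper: invoke the rate \eqref{eq:convrate} to extract a subsequence along which $\|\nabla f(W^t)\|\to 0$, use monotone descent to keep all iterates in the compact level set from Lemma~\ref{lem:bddgrad}, and conclude by continuity of $\nabla f$. The paper realizes the subsequence concretely as the indices of successive record-low gradient norms, but this is the same compactness-and-continuity wrap-up, and your sketch of how \eqref{eq:convrate} is obtained likewise mirrors the paper's Theorem~\ref{thm:lbfgs} and its corollary.
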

\begin{proof}
Suppose that $\nabla f(W^t) \neq 0$ for all $t$. We define a
subsequence ${\cal S}$ of $\{ W^t \}$ as follows:
\[
{\cal S} := \{ \hat{t} \, : \, \| \nabla f(W^{\hat{t}}) \| < \| \nabla f(W^s) \|, \;\; \forall s =0,1,\dotsc, \hat{t}-1 \}.
\]
This subsequence is infinite, since otherwise we would have a strictly
positive lower bound on $\| \nabla f(W^t) \|$, which contradicts
\eqref{eq:convrate}. Moreover, \eqref{eq:convrate} implies that
$\lim_{t \in {\cal S}} \, \| \nabla f(W^t) \| = 0$. Since $\{ W^t
\}_{t \in {\cal S}}$ all lie in the compact level set, this
subsequence has an accumulation point $\hat{W}$, and clearly $\nabla
f(\hat{W})=0$, proving the claim.
\end{proof}

\subsection{Proof of the Gradient Bound}
\label{subsec:convergence}

We now prove the result \eqref{eq:convrate} for the modified L-BFGS
method applied to \eqref{eq:nn}. The proof depends crucially on
showing that the bounds \eqref{eq:upperlower} and
\eqref{eq:secondupper} hold for all vector pairs $(\bs_j,\by_j)$ that
are used to define $B_t$ in \eqref{eq:Bt_new}.
\begin{theorem}
\label{thm:lbfgs}
Given any initial point $W^0$, using the modified L-BFGS algorithm
discussed in Section~\ref{subsec:new-lbfgs} to optimize \eqref{eq:nn},
then there exists $\delta > 0$ such
\begin{equation}
1 \geq \frac{-\nabla f(W^t)^T \bd_t}{\|\nabla f(W^t)\|\|\bd_t\|} \geq
\delta, \quad t=0,1,2,\dotsc.
\label{eq:angle}
\end{equation}
Moreover, there exist $M_1,M_2$ with $M_1 \geq M_2 > 0$ such that 
\begin{equation}
M_2 \|\nabla f(W^t)\| \leq \|\bd_t\| \leq M_1 \|\nabla f(W^t)\|,
\quad  t=0,1,2,\dotsc.
\label{eq:bound}
\end{equation}
\end{theorem}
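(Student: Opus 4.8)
The plan is to reduce both \eqref{eq:angle} and \eqref{eq:bound} to a single structural fact: the inverse-Hessian approximations $B_t$ generated by the modified recursion \eqref{eq:Bt_new}--\eqref{eq:Bt0_new} are symmetric positive definite with all eigenvalues confined to a fixed interval $[\lambda_{\min},\lambda_{\max}]$, $0 < \lambda_{\min} \le \lambda_{\max}$, independent of $t$. Granting this, \eqref{eq:bound} follows at once from $\bd_t = -B_t\nabla f(W^t)$, since $\lambda_{\min}\|\nabla f(W^t)\| \le \|\bd_t\| \le \lambda_{\max}\|\nabla f(W^t)\|$, giving $M_2 = \lambda_{\min}$ and $M_1 = \lambda_{\max}$. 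For \eqref{eq:angle}, the numerator is $-\nabla f(W^t)^T\bd_t = \nabla f(W^t)^T B_t \nabla f(W^t) \ge \lambda_{\min}\|\nabla f(W^t)\|^2$, while $\|\bd_t\| \le \lambda_{\max}\|\nabla f(W^t)\|$; dividing gives the lower bound $\delta = \lambda_{\min}/\lambda_{\max}$, and the upper bound of $1$ is Cauchy--Schwarz. So the entire content of the theorem lies in the uniform eigenvalue bounds.

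To obtain those bounds I would first verify that the curvature conditions \eqref{eq:upperlower} and \eqref{eq:secondupper} hold for every pair $(\bs_j,\by_j)$ retained in \eqref{eq:Bt_new}. The lower bound $a\|\bs_j\|^2 \le \by_j^T\bs_j$ is immediate with $a = \tilde\epsilon$, since only pairs passing the cautious test \eqref{eq:cautious} are used. The key point is that controlling the step sizes involves no circularity: because each $B_t$ is positive definite (a structural consequence of $B_t^0 \succ 0$ together with $\by_j^T\bs_j > 0$ for retained pairs), $\bd_t$ is a descent direction, so the Armijo backtracking \eqref{eq:armijo} forces every iterate into the level set $\{W : f(W) \le f(W^0)\}$. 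By the argument of Lemma~\ref{lem:bddgrad} this set lies in the ball \eqref{eq:B} of radius $R := \sqrt{(2/\epsilon)f(W^0)}$, whence $\|\bs_t\| \le 2R$ uniformly. Setting $c_2 = 2R$ in Lemma~\ref{lem:Lipschitz} bounds the averaged Hessian \eqref{eq:def.Hbar} by $\|\bar H_t\| \le L_{c_2}$; then, writing $\by_t = \bar H_t \bs_t$, I obtain $\by_t^T\bs_t \le L_{c_2}\|\bs_t\|^2$ (so $b = L_{c_2}$) and $\by_t^T\by_t \le L_{c_2}^2\|\bs_t\|^2$, which combined with $\by_t^T\bs_t \ge \tilde\epsilon\|\bs_t\|^2$ yields \eqref{eq:secondupper} with $\bar b = L_{c_2}^2/\tilde\epsilon$.

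With \eqref{eq:upperlower} and \eqref{eq:secondupper} in hand, the final step is the classical trace--determinant argument of \cite{LiuN89a}. First, the scaling \eqref{eq:Bt0_new} gives $B_t^0 = \big(\by_i^T\bs_i/(\by_i^T\by_i)\big) I$ with the scalar lying in $[1/\bar b,\, 1/\tilde\epsilon]$ (the lower end from \eqref{eq:secondupper}, the upper end from $\by_i^T\bs_i \le \|\by_i\|\|\bs_i\|$ together with $\|\by_i\| \ge \tilde\epsilon\|\bs_i\|$). Working with the direct approximation $H_t := B_t^{-1}$, each BFGS update increases the trace by the quantity in \eqref{eq:secondupper}, which is at most $\bar b$, and multiplies the determinant by $\by_j^T\bs_j/(\bs_j^T H_j \bs_j) \ge a/\lambda_{\max}(H_j)$; since $B_t$ is rebuilt from $B_t^0$ using at most $m$ retained pairs, the trace of $H_t$ stays bounded above and its determinant bounded below, forcing the eigenvalues of $H_t$ --- and hence of $B_t$ --- into a fixed positive interval.

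I expect the main obstacle to be the bookkeeping in this last step: unlike full BFGS, the limited-memory matrix $B_t$ is reassembled each iteration from a $t$-dependent initial matrix $B_t^0$ and a skipped subset of pairs, so I must show the trace and determinant bounds are uniform in $t$ rather than accumulating. The saving grace is precisely the bounded memory --- at most $m$ updates are ever applied to a uniformly bounded $B_t^0$ --- so the per-update increments controlled by \eqref{eq:upperlower}--\eqref{eq:secondupper} cannot compound without limit. Establishing the step bound $\|\bs_t\| \le 2R$ without circular dependence on the eigenvalue bounds, by leaning only on positive definiteness and the descent property, is the other point requiring care.
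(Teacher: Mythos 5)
Your proposal is correct and follows essentially the same route as the paper's proof: establish descent (hence level-set containment) from positive (semi)definiteness of $B_t$, derive the curvature bounds \eqref{eq:upperlower}--\eqref{eq:secondupper} from the cautious rule and the Hessian bound of Lemma~\ref{lem:Lipschitz}, and then run the Liu--Nocedal trace--determinant argument on $B_t^{-1}$ to get uniform eigenvalue bounds, from which \eqref{eq:angle} and \eqref{eq:bound} follow with $\delta = M_2/M_1$. The only difference is cosmetic: you make explicit the choice $c_2 = 2R$ and the non-circularity of the level-set argument, which the paper leaves implicit.
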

\begin{proof}
We first show that for all $t > 0$, the following descent condition
holds:
\begin{equation}
f(W^t) \leq f(W^{t-1}),
\label{eq:descent2}
\end{equation}
implying that 
\begin{equation}
W^t \in \{W \mid f(W) \leq f(W^0)\}.
\label{eq:levelset}
\end{equation}
To prove \eqref{eq:descent2}, for the case that $t = 0$ or $\hat{m}(t) =
0$, it is clear that $\bd_t = -\nabla f(W^t)$ and thus the condition
\eqref{eq:armijo} guarantees that \eqref{eq:descent2} holds.  We now
consider the case $\hat{m}(t) > 0$.  From \eqref{eq:Bt_new}, since
\eqref{eq:cautious} guarantees $\rho_{i^t_j} \geq 0$ for all $j$, we
have that $B_t$ is positive semidefinite.  Therefore, \eqref{eq:step}
gives $\nabla f(W^t)^T \bd_t \leq 0$, which together with
\eqref{eq:armijo} implies \eqref{eq:descent2}.

Next, we will show \eqref{eq:upperlower} and \eqref{eq:secondupper}
hold for all pairs $(\bs_j,\by_j)$ with
$j=i^t_1,\dotsc,i^t_{\hat{m}(t)}$. 
The
left inequality in \eqref{eq:upperlower} follows directly from
\eqref{eq:cautious}, with $a = \tilde{\epsilon}$. We now prove the
right inequality of \eqref{eq:upperlower}, along with
\eqref{eq:secondupper}. Because \eqref{eq:levelset} holds, we have
from Lemma~\ref{lem:Lipschitz} that $\bar{H}_t$ defined by
\eqref{eq:def.Hbar} satisfies
\begin{equation}
\|\bar{H}_t\| \leq L_{c_2}, \quad t=0,1,2,\dotsc.
\label{eq:bddH}
\end{equation}
From $\by_t = \bar{H}_t \bs_t$, \eqref{eq:bddH}, and
\eqref{eq:cautious}, we have for all $t$ such that
\eqref{eq:cautious} holds that
\begin{equation}
\frac{\|\by_t\|^2}{\by_t^T \bs_t} \leq \frac{L_{c_2}^2
  \|\bs_t\|^2}{\by_t^T \bs_t} \leq \frac{L_{c_2}^2}{\tilde\epsilon},
\label{eq:upper2}
\end{equation}
which is exactly \eqref{eq:secondupper} with $\bar{b} =
L_{c_2}^2/\tilde{\epsilon}$.

From $\by_t = \bar{H}_t \bs_t$, the Cauchy-Schwarz inequality, and
\eqref{eq:bddH}, we get
\begin{equation*}
\by_t^T \bs_t \leq \|\by_t\|\|\bs_t\| \leq
\|\bs_t\|\|\bar{H}_t\|\|\bs_t\| =L_{c_2}\|\bs_t\|^2,
\end{equation*}
proving the right inequality of \eqref{eq:upperlower}, with $b =
L_{c_2}$.


Now that we have shown that \eqref{eq:upperlower} and
\eqref{eq:secondupper} hold for all indices
$i^t_1,\dotsc,i^t_{\hat{m}(t)}$, we can follow the proof in
\cite{LiuN89a} to show that there exist $M_1 \geq M_2 > 0$ such that
\begin{equation}
M_1 I \succeq B_t \succeq M_2 I, \quad \mbox{for all $t$.}
\label{eq:Bbound}
\end{equation}
The rest of the proof is devoted to showing that this bound
holds. Having proved this bound, the results \eqref{eq:bound} and
\eqref{eq:angle} (with $\delta = M_2 / M_1$) follow directly from the
definition \eqref{eq:step} of $\bd_t$.

To prove \eqref{eq:Bbound}, we first bound $B_t^0$ defined in
\eqref{eq:Bt0_new}.  This bound will follow if we can prove a bound on
$\by_t^T\bs_t / \|\by_t\|^2$ for all $t$ satisfying
\eqref{eq:cautious}.  Clearly when $\hat{m}(t) = 0$, we have $B_t^0 =
I$, so there are trivial lower and upper bounds.  For $\hat{m}(t) >
0$, \eqref{eq:secondupper} implies a lower bound of $1/\bar{b} =
\tilde{\epsilon}/L_{c_2}^2$. For an upper bound, we have from 
\eqref{eq:cautious}  that
\[
\tilde{\epsilon} \| \bs_j\|^2 \le  \| \bs_j \| \| \by_j \| \;\;
\Rightarrow \;\; \| \bs_j \| \le \frac{1}{\tilde{\epsilon}} \| \by_j \|.
\]
Hence, from \eqref{eq:Bt0_new}, we have
\[
\left\| B_t^0 \right\| = \frac{\left|\by_{i^t_{\hat{m}(t)}}^T
  \bs_{i^t_{\hat{m}(t)}}\right|}{\left\| \by_{i^t_{\hat{m}(t)}}\right\|^2 } \le
\frac{\left\| \bs_{i^t_{\hat{m}(t)}} \right\|}{\left\| \by_{i^t_{\hat{m}(t)}}\right\|} \le
\frac{1}{\tilde{\epsilon}}.
\]


Now we will prove the results by working on the inverse of $B_t$.
Following \cite{LiuN89a}, the inverse of $B_t$ can be obtained by
\begin{align}
H_t^{(0)} &= (B_t^0)^{-1},\nonumber\\ H_t^{(k+1)} &= H_t^{(k)} -
\frac{H_t^{(k)}\bs_{i^t_k} \bs_{i^t_k}^T H_t^{(k)}}{\bs_{i^t_k}^T
  H_t^{(k)}\bs_{i^t_k}} \nonumber\\ &\qquad
+\frac{\by_{i^t_k}\by_{i^t_k}^T }{\by_{i^t_k}^T \bs_{i^t_k}}, \;\; 
k=0,\dots, \hat{m}(t)-1,
\label{eq:inverse}\\
B_t^{-1} &= H_t^{\hat{m}(t)}.
\nonumber
\end{align}
Therefore, we can bound the trace of $B_t^{-1}$ by using \eqref{eq:upper2}.
\begin{align}
\trace(B_t^{-1}) &\leq \trace((B_t^0)^{-1}) + \sum_{k=0}^{\hat{m}(t)-1}
\frac{\by_{i^t_k}^T\by_{i^t_k}}{\by_{i^t_k}^T \bs_{i^t_k}}
\nonumber\\ &\leq \trace((B_t^0)^{-1}) + \hat{m}(t) \frac{L_{c_2}^2}{\tilde{\epsilon}}.
\label{eq:uppB}
\end{align}
This together with the fact that $B_t^{-1}$ is positive-semidefinite
and that $B_t^0$ is bounded imply that there exists $M_2>0$ such that 
\[
\| B_t^{-1} \| \le \trace(B_t^{-1}) \le M_2^{-1},
\]
which implies that $B_t \succeq M_2 I$, proving the right-hand
inequality in \eqref{eq:Bbound}. (Note that this upper bound for the
largest eigenvalue also applies to $H_t^{(k)}$ for all
$k=0,1,\dotsc,\hat{m}(t)-1$.)

For the left-hand side of \eqref{eq:Bbound}, we have from the
formulation for \eqref{eq:inverse} in \cite{LiuN89a} (see \cite{Pea69a} for a derivation) and the upper bound $\|H_t^{(k)}\| \le
M_2$ that
\begin{align*}
\det(B_t^{-1})
		&=\det((B_t^0)^{-1}) \prod_{j=0}^{\hat{m}(t)-1}
		\frac{\by_{i^t_k}^T\bs_{i^t_k}}{\bs_{i^t_k}^T \bs_{i^t_k}}
		\frac{\bs_{i^t_k}^T\bs_{i^t_k}}{\bs_{i^t_k}^T H_t^{(k)}
		\bs_{i^t_k}}\\
		&\geq \det((B_t^0)^{-1}) \left(\frac{\tilde\epsilon}{M_2} \right)^{\hat{m}(t)} \\
& \geq \bar{M}_1^{-1},
\end{align*}
for some $\bar{M}_1>0$.  
Since the eigenvalues of $B_t^{-1}$ are upper-bounded by $M_2^{-1}$,
it follows from the positive lower bound on $\det (B_t^{-1})$ that
these eigenvalues are also lower-bounded by a positive number. The
left-hand side of \eqref{eq:Bbound} follows.
\end{proof}

\begin{corollary}
Given any initial point $W^0$, if we use the algorithm discussed in
Section~\ref{subsec:new-lbfgs} to solve \eqref{eq:nn}, then the bound
\eqref{eq:convrate} holds for the norms of the gradients at the
iterates $W^0,W^1, \dotsc$.
\end{corollary}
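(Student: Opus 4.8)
The plan is to derive \eqref{eq:convrate} from a standard gradient-related descent argument, combining the direction bounds of Theorem~\ref{thm:lbfgs} with the Armijo backtracking line search \eqref{eq:armijo}. The necessary ingredients are already available: Theorem~\ref{thm:lbfgs} supplies the angle condition \eqref{eq:angle} and the norm equivalence \eqref{eq:bound}, which together make $\bd_t$ a gradient-related search direction; Lemma~\ref{lem:bddgrad} bounds $\|\nabla f\|$ by $c_1$ on the level set; and Lemma~\ref{lem:Lipschitz} furnishes a Lipschitz constant $L_{c_2}$ for $\nabla f$ on a neighborhood of that level set. Since the proof already noted that the method is descent (so all iterates stay in $\{W \mid f(W)\le f(W^0)\}$), the whole argument takes place inside a compact region.

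First I would establish a uniform lower bound on the accepted step $\eta_t$. Because $\|\bd_t\| \le M_1\|\nabla f(W^t)\| \le M_1 c_1$ by \eqref{eq:bound} and Lemma~\ref{lem:bddgrad}, and backtracking only tries $\eta \le 1$, every trial point $W^t + \eta\bd_t$ lies within distance $M_1 c_1$ of the level set; choosing $c_2 = M_1 c_1$ in Lemma~\ref{lem:Lipschitz} makes $\nabla f$ Lipschitz with constant $L_{c_2}$ along the entire search segment. The usual descent-lemma estimate then shows \eqref{eq:armijo} holds whenever $\eta \le 2(1-\gamma)(-\nabla f(W^t)^T\bd_t)/(L_{c_2}\|\bd_t\|^2)$, and combining \eqref{eq:angle} with \eqref{eq:bound} gives $(-\nabla f(W^t)^T\bd_t)/\|\bd_t\|^2 \ge \delta/M_1$. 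Hence the backtracking rule with factor $\beta$ accepts a step $\eta_t \ge \eta_{\min} := \min\{1,\, 2\beta(1-\gamma)\delta/(L_{c_2}M_1)\} > 0$, independent of $t$.

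Next I would turn this into a per-iteration decrease. Feeding the step bound and the angle condition back into \eqref{eq:armijo}, together with $|\nabla f(W^t)^T\bd_t| \ge \delta M_2\|\nabla f(W^t)\|^2$ (again from \eqref{eq:angle} and \eqref{eq:bound}), yields
\[
f(W^t) - f(W^{t+1}) \ge \kappa\,\|\nabla f(W^t)\|^2, \qquad \kappa := \eta_{\min}\gamma\delta M_2 > 0.
\]
Summing over $i=0,\dots,t$ telescopes the left side to $f(W^0) - f(W^{t+1}) \le f(W^0)$, where the bound uses $f \ge 0$ (both the loss $\ell$ of \eqref{eq:def.l} and the regularizer are nonnegative). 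Thus $\sum_{i=0}^{t}\|\nabla f(W^i)\|^2 \le f(W^0)/\kappa$ for every $t$, and bounding the sum below by $(t+1)\min_{i\le t}\|\nabla f(W^i)\|^2$ gives $\min_{i\le t}\|\nabla f(W^i)\|^2 \le f(W^0)/(\kappa(t+1))$; taking square roots delivers \eqref{eq:convrate}.

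The hard part will be the step-size lower bound, specifically verifying that all backtracking trial points remain in a single compact set on which one fixed Lipschitz constant holds. This is exactly what the enlarged set of Lemma~\ref{lem:Lipschitz} is designed for, and the estimate $\|\bd_t\| \le M_1 c_1$ is precisely what lets us fix $c_2$ once and for all. Everything else is routine algebra with \eqref{eq:angle}, \eqref{eq:bound}, and the Armijo inequality.
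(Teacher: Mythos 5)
Your proposal is correct and follows essentially the same route as the paper's proof: lower-bound the backtracking step via the descent lemma with the Lipschitz constant from Lemma~\ref{lem:Lipschitz} together with the angle and norm bounds of Theorem~\ref{thm:lbfgs}, derive the per-iteration decrease $f(W^t)-f(W^{t+1})\ge \hat{\eta}\|\nabla f(W^t)\|^2$, and telescope using $f\ge 0$. Your explicit choice $c_2 = M_1 c_1$ to fix a single Lipschitz constant over all trial points is a slightly more careful rendering of a step the paper leaves implicit, but the argument and constants are otherwise identical.
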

\begin{proof}
First, we lower-bound the step
size obtained from the backtracking line search procedure.  Consider
any iterate $W^t$ and the generated update direction $\bd_t$ by the
algorithm discussed in Section \ref{subsec:new-lbfgs}.  From
Theorem~\ref{thm:lbfgs}, we have
\[
\| \bd_t \|^2 \le M_1 \| \bd_t \| \| \nabla f(W^t ) \| \le -\frac{M_1}{\delta}
\nabla f(W^t)^T \bd_t.
\]
Thus by using Taylor's theorem, and the uniform upper bound on $\|
\nabla^2 f(W)\|$ in the level set defined in
Lemma~\ref{lem:Lipschitz}, we have for any value of $\eta$
\begin{align*}
	f(W^t + \eta \bd_t) &\le f(W^t) + \eta \nabla f(W^t)^T \bd_t +
		\frac{L_{c_2}\eta^2}{2}\|\bd_t\|^2\\
		&\leq f(W^t) + \eta \nabla f(W^t)^T \bd_t \left(1 -
		\frac{L_{c_2}M_1 \eta}{2\delta} \right).
\end{align*}

Therefore, since $\nabla f(W^t)^T \bd_t < 0$, \eqref{eq:armijo} holds
whenever
\begin{equation*}
		1 - \eta \frac{L_{c_2}M_1}{2\delta} \geq \gamma \;\;
\Leftrightarrow \;\; \eta \le \bar{\eta} := \frac{2(1-\gamma)\delta}{L_{c_2}M_1}.
\end{equation*}
Because the backtracking mechanism decreases the candidate stepsize by
a factor of $\beta \in (0,1)$ at each attempt, it will ``undershoot''
$\bar{\eta}$ by at most a factor of $\beta$, so we have
\begin{equation}
\label{eq:stepsize}
\eta_t \geq \min (1, \beta \bar{\eta}), \quad \mbox{for all $t$.}
\end{equation}
From \eqref{eq:armijo},
Theorem~\ref{thm:lbfgs}, and \eqref{eq:stepsize} we have that
\begin{align}
f(W^{t+1}) &\leq f(W^t) + \eta_t \gamma \nabla f(W^t)^T \bd_t
		\nonumber\\
		&\leq f(W^t) - \eta_t \gamma \delta \|\nabla
		f(W^t)\| \| \bd_t \| \nonumber\\
		&\leq f(W^t) - \eta_t \gamma \delta M_2 \|\nabla
		f(W^t)\|^2\nonumber\\
		&\leq f(W^t) - \hat{\eta} \|\nabla f(W^t)\|^2.
		\label{eq:tosum}
\end{align}
where $\hat{\eta} := \min (1, \beta \bar{\eta})\gamma \delta M_2$.
Summing \eqref{eq:tosum} over $t=0,1,\dotsc,k$, we get
\begin{align*}
		\min_{0 \leq t \leq k} \|\nabla f(W^t)\|^2 &\leq
                \frac{1}{k+1} \sum_{t=0}^k \|\nabla f(W^t)\|^2\\ &\leq
                \frac{1}{k+1}\frac{1}{\hat{\eta}} \sum_{t=0}^k (f(W^t) -
                f(W^{t+1}))\\ 
&\leq
                \frac{1}{k+1}\frac{1}{\hat{\eta}} (f(W^0) -
                f(W^{k+1}))\\ 
&\leq \frac{1}{k+1} \frac{1}{\hat{\eta}}
                (f(W^0) - f^*) = O(1/k).
\end{align*}
where $f^*$ is the optimal function value in \eqref{eq:f}, which is
lower-bounded by zero.  By taking square roots of both sides, the
claim \eqref{eq:convrate} follows.
\end{proof}
\subsection{Neural Network Initialization}
Initialization of the neural network training is not trivial. The
obvious initial point of $W_j=0$, $j \in [N+1]$ has $\nabla_{W_j}
f=0$, $j \in [N+1]$ (as can be seen via calculations with
\eqref{eq:f}, \eqref{eq:tanh}, and \eqref{eq:grads}), so is likely a
saddle point. A gradient-based step will not move away from such a
point.
Rather, we start from a random point close to the origin. Following a
suggestion from a well-known online
tutorial,\footnote{\url{http://deeplearning.net/tutorial/mlp.html\#weight-initialization}}
we choose all elements of each $W_j$ uniformly, randomly, and
identically distributed from the interval
$[-a\sqrt{6}/\sqrt{d_{j-1}+d_{j}},a\sqrt{6}/\sqrt{d_{j-1}+d_{j}}]$, where
$a=1$.  We experimented with smaller values of $a$, when setting $a=1$
leads to slow convergence in the training error (which is an indicator
that the initial point is not good enough), by starting with $a =
10^{-t}$ for some non-negative integer $t$.  We keep trying smaller
$t$ until either the convergence is fast enough, or the resulting
solution has high training errors and the optimization procedure
terminates early. In the latter case, we then set $t\leftarrow t+1$,
choose new random points from the interval above for the new value of
$a$, and repeat. 


\subsection{Additional Experiment on Using More Layers in the Neural
Networks}
\label{subsec:layers}

We now examine the effects of adding more hidden layers to the neural
network. As a test case, we choose the 57-bus case, with ten
pre-selected PMU locations, at nodes $[1, 2, 17, 19, 26, 39, 40, 45,
	46, 57]$. (These were the PMUs selected by the greedy heuristic in
\cite[Table~III]{KimW16a}.) We consider three neural network
configurations. The first is the single hidden layer of 200 nodes
considered above. The second contains two hidden layers, where the
layer closer to the input has 200 nodes and the layer closer to the
output has 100 nodes. The third configuration contains four hidden
layers of 50 nodes each. For this last configuration, when we solved
the training problem with L-BFGS, the algorithm frequently required
modification to avoid negative-curvature directions. (In that sense,
it showed greater evidence of nonconvexity.)

Figure~\ref{fig:multiple_new} shows the training error and test error
rates as a function of training time, for these three
configurations. The total number of variables in each model is shown
along with the final training and test error rates in
Table~\ref{tbl:multiple}. The training error ultimately achieved is
smaller for the multiple-hidden-layer configurations than for the
single hidden layer.  
However, the single hidden layer still has a slightly better test
error. This suggests that the multiple-hidden-layer models may have
overfit the training data. (A further indication of overfitting is
that the test error increases slightly for the four-hidden-layer
configuration toward the end of the training interval.) This test is
not definitive, however; with a larger set of training data, we may
find that the multiple-hidden-layer models give better test errors.
\begin{table}
\centering
\caption{Performance of different number of layers in the neural network model.}
\label{tbl:multiple}
\begin{tabular}{l|rrr}
\#layers & \#variables & Test error & Training error\\
\hline
1 & 19,675 & 4.15\% & 2.36\% \\
2 & 32,275 & 5.87\% & 1.50\% \\
4 & 12,625 & 6.83\% & 2.02\%
\end{tabular}
\end{table}

\begin{figure}
	\centering
\subfloat[Training error]{
		\includegraphics[width=.48\linewidth,height=1in]{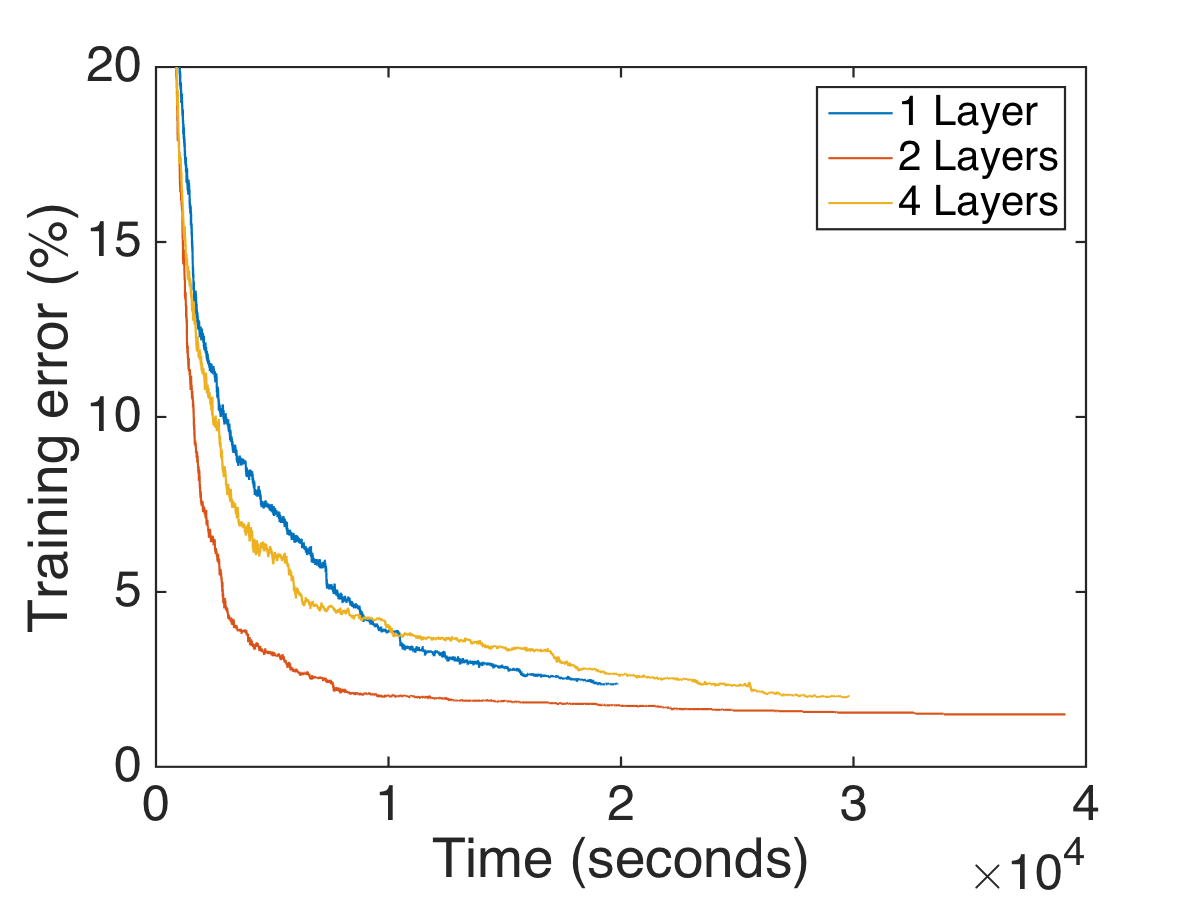}}
	\subfloat[Test error]{
		\includegraphics[width=.48\linewidth,height=1in]{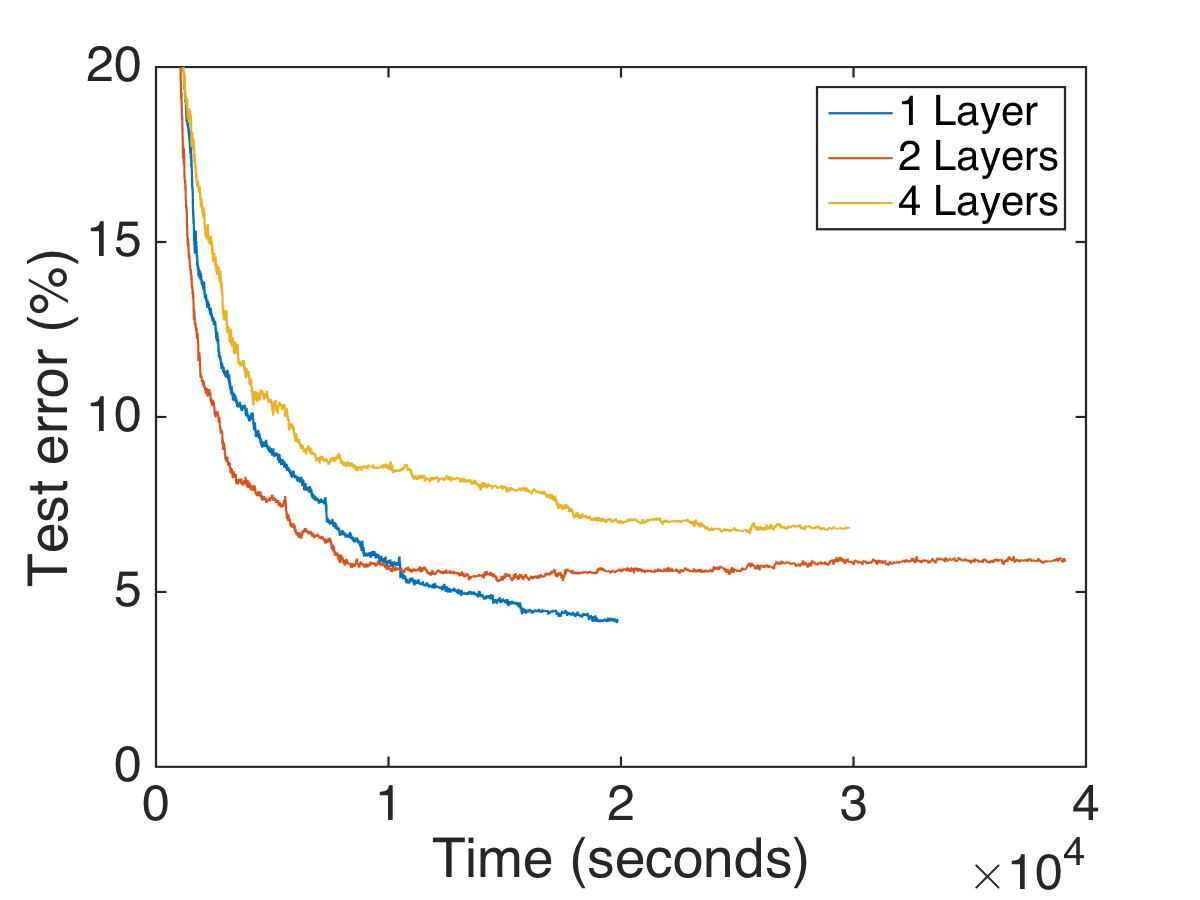}}
	\caption{Comparison between 1, 2 and 4 layers. We show training and test error v.s. running time.}
	\label{fig:multiple_new}
\end{figure}
}
\end{document}